\newcommand{\stitle}[1]{\vspace{0.8ex}\noindent\textup{\textbf{#1}}}
\newtheorem{lemma}{Lemma}
 \newtheorem{definition}{Definition}
\newcommand{\hide}[1]{}
\def\bitcoinA{%
  \leavevmode
  \vtop{\offinterlineskip 
    \setbox0=\hbox{B}%
    \setbox2=\hbox to\wd0{\hfil\hskip-.03em
    \vrule height .3ex width .15ex\hskip .08em
    \vrule height .3ex width .15ex\hfil}
  \vbox{\copy2\box0}\box2}}
\def\bitcoinB{\leavevmode
  {\setbox0=\hbox{\textsf{B}}%
    \dimen0\ht0 \advance\dimen0 0.2ex
    \ooalign{\hfil \box0\hfil\cr
      \hfil\vrule height \dimen0 depth.2ex\hfil\cr
    }%
  }%
}
\def\@copyrightspace{\relax}
\begin{document}


\title{Flow Computation in Temporal Interaction Networks}



%
%
%
%

\numberofauthors{1} 

\author{
%
%
\alignauthor
Chrysanthi Kosyfaki, Nikos Mamoulis, Evaggelia Pitoura, Panayiotis Tsaparas\\
       \affaddr{Department of Computer Science and Engineering}\\
       \affaddr{University of Ioannina}\\
       \email{\{xkosifaki,nikos,pitoura,tsap\}@cse.uoi.gr}
       }



\maketitle

\begin{abstract}
Temporal interaction networks capture the history of activities between
entities along a timeline.
At each interaction, some quantity of data
(money, information, kbytes, etc.) {\em flows} from one vertex of the
network to another.
Flow-based analysis
can reveal
important information.
For instance,
financial intelligent units (FIUs) are
interested in finding subgraphs in transactions networks with
significant flow of money transfers.
In this paper, we introduce
the flow computation problem
in an interaction network or a subgraph thereof.
We propose and study two models of flow computation, one based on a
greedy flow transfer assumption and one that finds the maximum possible
flow.
We show that the greedy flow computation problem can be easily solved
by a single scan of the interactions in time order.
For the harder maximum flow problem, we propose graph
precomputation and simplification approaches that can greatly reduce its
complexity in practice.
As an application of flow computation,
we formulate and solve the problem of {\em flow pattern} search,
where, given a graph pattern,  the objective is  to find
its instances and their flows in a large
interaction network.
We evaluate our algorithms using real datasets. The results show that
the techniques proposed in this paper can
greatly reduce the cost of flow computation and pattern
enumeration.
\end{abstract}

\section{Introduction}
\textit{Temporal interaction networks}
model the transfer of data quantities between entities along a timeline.
At each interaction, a quantity (money,
messages, kbytes etc.) {\em flows} from one network vertex (entity) to another.
Analyzing interaction networks can reveal important information (e.g.,
cyclic transactions, message interception).
For instance, financial intelligent units (FIUs) are often interested
in finding subgraphs of a transaction network, wherein vertices
(financial entities)
have exchanged a significant amount of money
directly or through intermediaries
(e.g., multiple small-volume transactions through a subnetwork
that aggregate to large
amounts).
Such exchanges may be linked to criminal behavior (e.g., money laundering).
In addition, significant money flow in the bitcoin transaction network has been associated to 
theft \cite{DBLP:conf/imc/MeiklejohnPJLMVS13}.


\stitle{Problem.}
In this paper, we study the problem of computing the flow through
an interaction network (or a sub-network thereof),
from a designated vertex $s$, called {\em source} to a designated
vertex $t$, called {\em sink}.
As an example, Figure \ref{fig:introex}(a) shows a toy interaction network,
where vertices are bank accounts and each edge is a sequence of
interactions in the form $(t_i,q_i)$, where $t_i$ is a timestamp and
$q_i$ is the transferred quantity (money).
To model and solve the flow computation problem from $s$ to $t$,
we assume that
throughout the history of interactions, each vertex $v$ has a {\em
  buffer} $B_v$.
Since we are interested in measuring the flow from $s$ to $t$, we
assume that initially $s$ has infinite buffer and that
the buffers of all other vertices are $0$. 
Interactions are examined in order of time and,
as a result of an interaction $(t_i,q_i)$ on edge $(v,u)$,
vertex $v$ may transfer from $B_v$ to $u$'s
buffer $B_u$ a quantity in $[0,q_i]$. 
For example, if interaction $(1,\$3)$ on edge $(s,x)$ transfers $\$3$
from $B_s$ to $B_x$, interaction $(5,\$5)$ on edge  $(x,z)$
can transfer at most $\$3$ from $B_x$ to $B_z$.
After the end of the timeline, the buffered quantity at the
sink vertex $t$ models the flow that has been transferred from $s$ to
$t$.

\begin{figure}[tbh]
  \centering 
  \small
  \begin{tabular}{@{}c c@{}}
  \includegraphics[width=0.46\columnwidth]{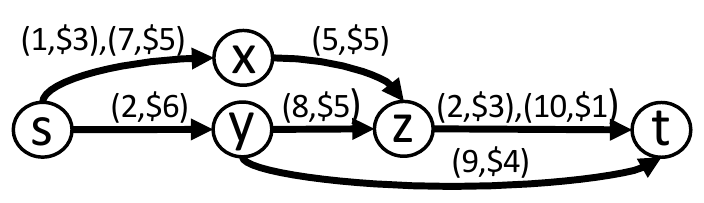}&
  \includegraphics[width=0.46\columnwidth]{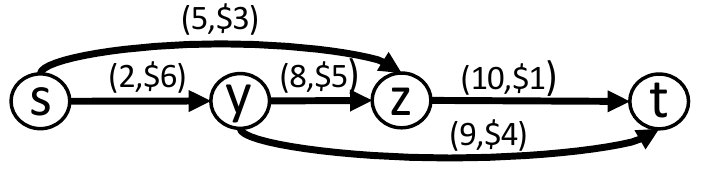}\\
    (a) interaction network&
         (b) simplified network\\
  \end{tabular}
  \caption{A toy interaction network}
  \label{fig:introex}
\end{figure}

We study two models of flow transfer as an effect of an interaction
$(t_i,q_i)$ on an edge $(v,u)$.
The first one is based on a {\em greedy flow transfer}
assumption, where $v$ transfers to $u$ the
maximum possible quantity, i.e., $\min\{q_i,B_v\}$.
According to the second model, $v$ may transfer to $u$
any quantity in $[0, \min\{q_i,B_v\}]$, {\em reserving} the remaining
quantity for future outgoing interactions from $v$ (to any 
vertex). The objective is then to compute the {\em maximum flow} that
can be transferred from $s$ to $t$.
In our running example, as a result of interaction $(2,\$6)$ on edge
$(s,y)$, $\$6$ are transferred from $B_s$ to $B_y$.
As a result of interaction $(8,\$5)$ on edge
$(y,z)$, the greedy model transfers 
$\$5$ (i.e., the maximum possible amount) from
$B_y$ to $B_z$, leaving only $\$1$ to $B_y$ for future interactions.
Hence, interaction $(9,\$4)$ on edge $(y,t)$ may only transfer $\$1$
to the buffer $B_t$ of the sink.
Had interaction $(8,\$5)$ transferred only $\$2$ from $B_y$ to $B_z$,
interaction $(9,\$4)$ would be able to transfer $\$4$ from 
$B_y$ to $B_t$. This decision would maximize the flow that reaches $t$.

\stitle{Applications.}
Flow computation in interaction networks finds application in
different domains. As already discussed, computing the flow
of money from one financial entity (e.g., back customer,
cryptocurrency user) to another can help in defining their
relationship and the roles of any intermediaries in them \cite{kondor2014inferring}. 
As another application, consider a transportation network (e.g.,
flights network, road network) and the problem of computing the
maximum flow (e.g., of vehicles or passengers) from a source to a
destination. Identifying cases of heavy flow transfer can help in
improving the scheduling or redesigning the network.
Similarly, in a communications network, 
measuring the flow between vertices (e.g., routers) can help in
identifying abnormalities (e.g., attacks) or bad design.
Recent studies in cognitive science \cite{Chen2019}
associate the information flow in
the human brain with the embedded network topology and the
interactions between different (possibly distant) regions.

\stitle{Contributions.}
We show that the greedy flow computation problem can be solved very fast by
performing only a linear scan of all interactions in order of time
and updating two buffers at each interaction.
However, greedy flow computation is of less interest and has fewer
applications
compared to the
more challenging maximum flow
computation problem. The latter is not always solved by greedy flow
computation, as we have shown
with the example of Figure \ref{fig:introex}(a).
We show how the maximum flow problem
can be formulated and solved using linear programming (LP).
In a nutshell, we
can define one variable for each interaction (except from those originating
from the source vertex $s$, which always transfer the maximum possible
flow) and find the values of the variables that
maximize the total flow that ends up at the sink.

Flow computation in networks is not a new problem, however, previous
work has mainly focused on
the classic maximum flow problem in a
static graph, where vertices are junctions and edges have capacities
\cite{cormen2009introduction}.
Our problem setup is quite different, since
our vertices model entities and
edges are time-series of interactions, each of which happens at a specific
timestamp.
However, as we show, our problem turns out to be
equivalent to a temporal flow computation
problem, where the edges
(and their capacities) are ephemeral.
Akrida et al. \cite{DBLP:conf/ciac/AkridaCGKS17}
show that this temporal flow computation problem is equivalent to flow
computation in a static graph, where an edge is defined for each 
ephemeral edge, meaning that the complexity of our problem
is quadratic to the number of interactions (e.g., if Edmonds–Karp
algorithm \cite{EdmondsK72} is used).

Hence, solving the maximum flow problem on a  network with numerous
interactions can be quite expensive.
We propose a set of techniques that reduce the cost in practice.
First, we show that for certain classes of networks (such as simple paths), the
greedy algorithm can compute (exactly)
the maximum flow in linear time to the number of interactions.
Verifying whether greedy can compute the maximum flow costs only
a single pass over the vertices.  
Second, we propose a preprocessing algorithm that
eliminates interactions, edges and vertices that cannot contribute to
the maximum flow, with a potential to greatly reduce the problem size
and complexity. For example, interaction $(2,\$3)$ on edge $(z,t)$ of
the network in Figure \ref{fig:introex}(a) can be eliminated because
all incoming interactions to $z$ have timestamps greater than $2$;
hence, interaction $(2,\$3)$ cannot transfer any incoming quantity to $z$.
Third, we design an algorithm that performs greedy flow
computation on a part of the graph, simplifying the graph on which LP
has to be eventually applied. For example, the path formed by
edges $(s,x)$ and $(x,z)$
can be reduced to a single edge $(s,z)$ as shown in Figure
\ref{fig:introex}(b), because not propagating the maximum possible
flow through this path to $z$ and reserving flow at $s$ or $x$ cannot
increase the maximum flow that eventually reaches $t$. 
We conduct an experimental evaluation, where we compute the flow
on subgraphs of three large real networks and show that our maximum
flow computation approach is very effective, achieving at least one order
of magnitude cost reduction compared to the baseline LP algorithm.

As an application of flow computation, we also formulate and
study the problem of flow pattern search in large interaction
networks. These patterns as small graphs that repeat themselves
in the network. The problem is to find the instances and compute
the flow for each of them. We propose a 
graph preprocessing approach that facilitates the enumeration of certain
classes of patterns and their maximum flows.
As we show experimentally, enumerating the instances of flow
patterns and computing their flow can greatly benefit from the
precomputed data.

The contributions of this paper can be summarized as follows:
\begin{itemize}
  \item This is the first work, to our knowledge, which studies flow
    computation in temporal interaction networks. We propose two models for
    flow computation and analyze their complexities. The first
    model comes together with a greedy computation algorithm, while
    maximum flow computation can be formulated and solved as a linear
    programming problem.
  \item For maximum flow computation, which can be
    expensive, we propose (i) an efficient check for verifying if
    it can be solved exactly by the greedy algorithm, (ii) a   
    graph preprocessing technique, which can eliminate interactions,
    vertices and edges from the graph, (iii) a graph simplification
    approach, which reduces part of the graph to edges, the flow of
    which can be derived using the greedy algorithm.
  \item We formulate and study a flow pattern enumeration problem
    which computes instances of graph patterns in a large graph and
    their flows. We show that a graph preprocessing technique can
    accelerate pattern enumeration.
  \item We conduct experiments using data from three real
    interaction networks to evaluate our techniques.
  \end{itemize}

  \stitle{Outline.}
  The rest of the paper is organized as follows. Section
\ref{sec:related} reviews work related to flow computation on static
and temporal graphs and to pattern enumeration on large networks.
Section \ref{sec:def} defines basic concepts and
Section \ref{sec:flowcomp} defines flow computation models and
algorithms.
In Section \ref{sec:algorithm}, we study the problem of flow pattern
search.
A thorough experimental evaluation on real data is presented in
Section \ref{sec:experiments}.
Finally, Section \ref{sec:conclusion} concludes the paper with
directions for future work.


\section{Related Work}\label{sec:related}
There have been numerous studies that investigate the problems of flow computation in networks and enumeration of patterns in graphs. In this section, we summarize the most representative works for the above problems and discuss their relation to our study.
\subsection{Flow Computation}
Ford and Fulkerson \cite{Ford:2010:FN:1942094} were the first who tracked the max-flow problem.
Given a DAG, with a {\em source} node $s$ with no incoming edges and a {\em sink} node $t$ with no outgoing edges and assuming that each edge has a {\em capacity},
the \textit{Ford-Fulkerson algorithm}
finds the maximum flow that can be transferred from $s$ to $t$ through the edges of the network, assuming that each edge has a maximum {\em capacity} for flow transfer.
The algorithm applies on {\em static} networks, in which the existence of edges and their capacities do not change over time. In addition, the flow is assumed to be transferred instantly from one vertex to another and to be constant over time.
Since then, a number of models and algorithms for maximum flow computation have been developed  \cite{ahuja1995network,goldberg2014efficient}.

Skutella \cite{Skutella2008AnIT} surveyed temporal maximum flow computation
problems.
In these problems, each edge, besides having a capacity, is characterized by a {\em transit time}, i.e., the time needed to transfer flow equal to its capacity \cite{hoppe1995efficient}.
The general problem is to find the maximum flow that can be transferred from $s$ to $t$ within a time horizon $T$
\cite{baumann2009earliest,DBLP:journals/networks/RuzikaSS11}.
In another model for temporal flow computation, each edge is assumed to be {\em ephemeral}, i.e., it cannot be used to transfer flow at any time. Akrida et al. \cite{DBLP:conf/ciac/AkridaCGKS17} studied the max-flow problem in such networks. They assume that each edge is valid at certain {\em days} (e.g., day 5 and day 8).
Similar to  \cite{Skutella2008AnIT}, the problem is to find the maximum flow that can be transferred from $s$ to $t$ by the end of day $l$. The capacity of an edge is the amount of flow that it can transfer each day that it is valid. The vertices of the network have a buffer, meaning that they can hold a maximum amount of flow before this can be transferred by an outgoing edge that will become available in the future.
Flow computation when the capacities of the edges are time-varying was also studied in  \cite{hamacher2003earliest}.


As opposed to all temporal flow computation problems studied in previous work
\cite{Skutella2008AnIT,DBLP:conf/ciac/AkridaCGKS17}
we do not consider networks where edges have capacities (variable or constant), but edges having sequences of instantaneous interactions with flow, which take place at specific timestamps.
Our objective is to compute the flow from a given source to a given sink vertex considering all interactions on the edges.
Still, as we show in Section \ref{sec:lp}, the maximum flow version of our problem is equivalent to the problem formulated and studied in \cite{DBLP:conf/ciac/AkridaCGKS17}, if we consider the interactions as ephemeral capacities of the edges.
Besides showing this equivalence, in Section \ref{sec:maxflowcomp}, we propose novel graph preprocessing and simplification techniques that greatly reduce the worst-case cost of maximum flow computation in practice. 


Flow computation in temporal networks is also related to similar problems, but with a quite different formulation and goal. For instance, Kumar et al. \cite{DBLP:conf/edbt/0002C17} study the identification of interaction sequences between nodes forming paths in the network, which model potential pathways for information spread.


\subsection{Network Patterns}
A number of works in the literature study the enumeration of graph patterns in static and temporal networks.
One of the earliest works is on motifs search \cite{Milo02networkmotifs:}.
Motifs are small patterns that repeat themselves in a network much more frequently than expected.
Paranjape et al. \cite{DBLP:conf/wsdm/ParanjapeBL17}
define motifs in temporal networks \cite{DBLP:journals/corr/abs-1108-1780}, by extending the definition of
\cite{Milo02networkmotifs:}, to consider the temporal information of the interactions between the graph vertices.
Specifically, such motifs are small connected graphs
whose edges are temporally ordered.
An instance of a motif is a sequence of interactions which have the structure of the motif and respect the time order of the motif's edges.
In addition, the time difference between the first and the last interaction in the instance should not exceed a maximum threshold.
The objective of motif search is to count the instances of one or more motifs in a large network.

Kosyfaki et al. \cite{KosyfakiMPT19}  defined and studied the enumeration of \textit{flow motifs} in interaction networks, considering both the time and the flow on the interactions.
Such motifs come with two constraints: the maximum possible duration a motif instance and the minimum possible flow of the motif.
Although we also study the enumeration of flow patterns in Section \ref{sec:algorithm}, (i) our flow computation model is very different compared to the one in \cite{KosyfakiMPT19}, as we consider maximum flow computation and also allow time-interleaving sequences of interactions, (ii) we study patterns that are not limited to simple paths, (iii) we propose precomputation approaches for pattern enumeration.

Pattern matching and enumeration in general graphs and temporal networks is a well-studied problem \cite{gallagher2006matching,DBLP:conf/icde/SemertzidisP16,
  DBLP:journals/pvldb/SunWWSL12, DBLP:conf/icde/RanuS09,DBLP:journals/corr/RedmondC16,zou2009distance}.
Sun et al. study the problem of pattern matching in large networks \cite{DBLP:journals/pvldb/SunWWSL12}. They propose \textit{STwig}, an
algorithm that combines graph exploration and joining intermediate results.
Moreover, their algorithm can adapted to work in parallel.
An older work \cite{DBLP:conf/icde/ChengYDYW08}
formulates and solves the pattern matching problem by joining the edges of the graph in a systematic way.
The enumeration of cyclic patterns in a temporal network was recently studied in \cite{DBLP:journals/pvldb/KumarC18}.
Z\"{u}fle et al. \cite{DBLP:conf/edbt/ZufleREF18} study the temporal relations between entities in social networks. For this purpose, they  enumerate temporal patterns with the help of a data structure that indexes  small pattern instances.
Mining frequent patterns in static and temporal networks has also been studied in previous work \cite{DBLP:conf/sigmod/GurukarRR15,kuramochi2001frequent,kuramochi2005finding}.

For graphs, where vertices (and/or edges) of the graph and the patterns are labeled, pattern matching is relatively easy, as vertex labels (or small subgraphs) can be indexed and search/join algorithms can be used to accelerate search. The flow pattern search problem that we study in Section \ref{sec:algorithm} is more challenging, because there are no constraints as to which vertices of the graph can match the vertices of a pattern. In addition, there is no previous work on enumerating pattern instances {\em and} their maximum flows, i.e., the problem that we study in Section \ref{sec:algorithm}.

\section{Definitions}\label{sec:def}
In
this section, we define basic concepts and summarize the most frequently
used notation. We begin by formally defining an interaction network.

\begin{definition}[Interaction Network]\label{def:network}
  An interaction \linebreak network is a directed graph $G(V,E)$. For each edge
  $e=(v,u)$ of the network, there is a sequence
  $e_S=\{(t_1,q_1), (t_2,q_2),\dots\}$ of interactions from
  node $v$ to node $u$. Each interaction $(t_i,q_i)$ has a quantity
  $q_i$, which is transferred from $v$ to $u$ at timestamp $t_i$.
\end{definition}

Figure \ref{fig:example}(a) shows an example of a
network, where edges are annotated with the corresponding
sequences of interactions.

In practice, an data analyst would be interested in measuring the total
flow from a specific {\em source} vertex of the network to
a specific {\em sink} vertex of the network. The source and the sink
might coincide.
In addition, the analyst might only want to
include certain vertices and edges in the subgraph
for which the flow is to be measured.

One way to select interesting subgraphs on which the flow should be measured
is by specifying a {\em network pattern}, i.e., the structure that the
interesting subgraphs should conform to, and identify their
{\em instances} in the interaction network.
We now provide definitions for a network pattern and its instances.



\begin{figure}[tbh]
  \centering
  \small
  \begin{tabular}{@{}c c c@{}}
  \includegraphics[width=0.34\columnwidth]{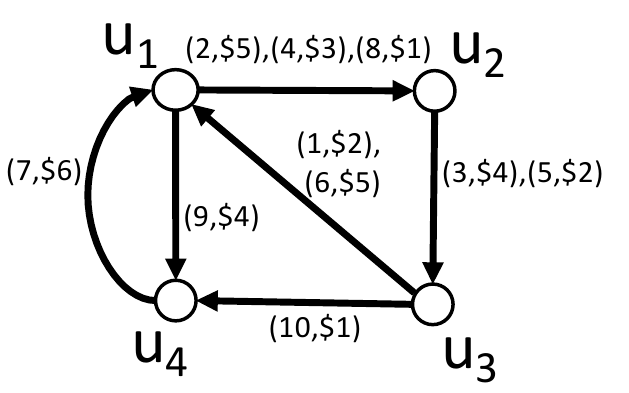}&
  \includegraphics[width=0.06\columnwidth]{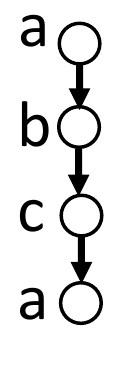}&
  \includegraphics[width=0.27\columnwidth]{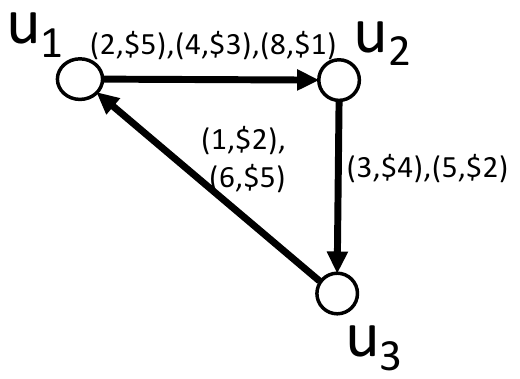}\\
    (a) interaction network&
         (b) pattern&
              (c) instance\\
  \end{tabular}
  \caption{Network, pattern, and instance}
  \label{fig:example}
\end{figure}

\begin{definition}[Network Pattern]\label{def:pattern}
  A network pattern \linebreak $G_P(V_P,E_P)$ is a directed acyclic graph, where
  each vertex $v\in V_P$ has a label $\ell(v)$.
\end{definition}

\begin{definition}[Instance]\label{def:match}
  An instance of pattern $G_P$ in graph $G$ is a subgraph $G_M(V_M,E_M)$
  of $G_T$, such that
  \begin{itemize}
  \item   there is a surjection $\mu: V_P\to V_M$
  from the vertex set $V_P$ of the pattern $G_P$ to the vertex set
  $V_M$ of $G_M$;
  \item for two vertices
    $v,u$ of $G_P$,
    $\mu(v)=\mu(u)$ iff
  $\ell(v)=\ell(u)$;
  \item  $(v,u)\in E_P$ iff $(\mu(v),\mu(u))\in E_M$.
  \end{itemize}
\end{definition}


We assume that the network $G$, in which we search for pattern instances 
is not labeled. The labels on the vertices of a pattern are only used to indicate
that pattern vertices having the same labels
should be mapped to the same graph vertex in a pattern instance.
Continuing the example of Figure \ref{fig:example}, consider the
network pattern $G_P$ shown in Figure \ref{fig:example}(b) which
includes 4 nodes connected in a chain. Since the first and
the last vertex of $G_P$ have the same label, this pattern corresponds to a
cyclic transaction (i.e., $a$ transfers some quantity to $b$, then $b$
transfers to $c$, then $c$ transfers to $a$).  Figure
\ref{fig:example}(c) shows an instance of this pattern, which
is a subgraph of the interaction network that satisfies Definition
\ref{def:match} (i.e., $a$ is mapped to $u_1$, $b$ to $u_2$, and $c$ to
$u_3$).

In the next section, we define and study the
problem of computing the total quantity that flows throughout
an interaction network or a subgraph thereof,
from a given source to a given sink vertex, before studying the
problem of enumerating network patterns and their flows in Section \ref{sec:algorithm}.
Table \ref{table:notations} summarizes the notation used frequently in
the paper.

\begin{table}[ht]
\caption{Table of notations}\label{table:notations}
\centering
\scriptsize
\begin{tabular}{@{}|@{~}c@{~}|@{~}c@{~}|@{}}
\hline
Notations &Description\\ %
\hline  
  $G(V,E)$ & input graph \\
$(t_i,q_i)$ & an interaction with quantity $q_i$ at time $t_i$\\ 
$src_i$ ($dest_i$) & source (destination) vertex of interaction $(t_i,q_i)$\\ 
$e_S=\{(t_1,q_1), (t_2,q_2),\dots\}$ & 	 sequence of interactions on edge $e$\\
  $B_v$ & total quantity buffered at node $v$\\
  $B^{t_i}_v$ & buffer at node $v$ by time $t_i$\\
  $G_P(V_P,E_P)$ & network pattern \\
 $G_M(V_M,E_M)$ & instance a network pattern\\
  $\mu(v)$ &  vertex of $V_M$ mapped to vertex $v\in V_P$\\
  

\hline
\end{tabular}
\end{table} 
\section{Flow Computation}\label{sec:flowcomp}
Here, we focus on the problem of flow computation
from a given source to a given sink vertex,
in an interaction (sub-)network.
Figure \ref{fig:flowcompex} illustrates an example of such a network,
consisting of four vertices and five edges. Each edge has a sequence
$\{(t_i,q_i)\}$ of interactions; in this example, each sequence has
only one interaction. 

\begin{figure}[tbh]
  \centering
  \small
  \includegraphics[width=0.50\columnwidth]{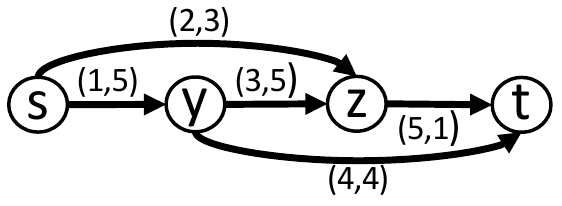}
  \caption{Example of a graph}
  \label{fig:flowcompex}
\end{figure}

We consider two definitions of the flow computation problem and how
they relate to each other.
First, in {\em greedy flow computation}, we take interactions in
order of time and assume that every
interaction $(t_i,q_i)$ greedily transfers the maximum possible quantity to the
target vertex,
given the quantity accumulated until time $t_i$ to the source vertex of the
interaction.
For example, in the graph of Figure \ref{fig:flowcompex}, interaction
$(t_i,q_i)=(3,5)$ on edge $(y,z)$ transfers a quantity of $q_i=5$ from $y$ to $z$,
because $y$ has received $5$ due to interaction $(t_j,q_j)=(1,5)$ on edge
$(s,y)$ which happened before ($t_j<t_i$).
On the other hand, in {\em maximum flow computation}, we
consider the case where a vertex may {\em reserve} some quantity
for future interactions, if that could maximize the maximum overall flow that
can be transferred throughout the DAG.
For example, in Figure \ref{fig:flowcompex}, interaction
$(t_i,q_i)=(3,5)$ on edge $(y,z)$ may transfer any quantity in $[0,5]$, since
vertex $y$ has accumulated $5$ units, by time $3$.
Both definitions comply to the principle that
a interaction $(t_i,q_i)$ on an edge $(v,u)$
cannot transfer a larger quantity than what
the source vertex $v$ has received from its incoming
interactions before time $t_i$
and {\em was not yet transferred} via its outgoing
interactions before time $t_i$.

In both definitions of flow computation,
we consider {\em connected} graphs which have just one {\em source} node
(with no incoming edges) and just one {\em sink} node with no outgoing
edges (like the graph of Figure \ref{fig:flowcompex}).
Our methods and algorithms can easily be extended for graphs with
multiple
sources. In this case, we can add a synthetic
source vertex $s$ and an edge from $s$ to each original source.
Similarly, if there are multiple sinks, we can add a synthetic
sink vertex $t$ and an edge from each original sink to the synthetic one.
Each of the outgoing edges from the synthetic source are given a single
interaction with the smallest possible timestamp and an infinite
quantity (in order for the original sources to be able to
transfer any quantity via their outgoing edges).
Each of the incoming edges of the synthetic sink are given a single
interaction with the largest possible timestamp and an infinite
quantity (in order for the original sinks to be able to
absorb any quantity via their incoming edges; these are eventually
accumulated at the synthetic sink).
Figure \ref{fig:ex_ss} shows an example of a graph before and after
adding the source and the sink.
In the rest of the paper, we assume that each input graph to our
flow computation problem
is connected and has a
single source vertex $s$ with no incoming edges
and a single sink vertex $t$ with no outgoing edges.
The objective is to compute the flow from $s$ to $t$.

\begin{figure}[tbh]
  \centering
  \small
  \begin{tabular}{@{}c c@{}}
  \includegraphics[width=0.2\columnwidth]{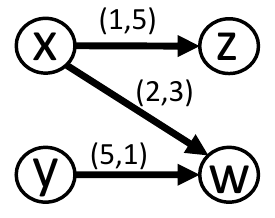}&
  \includegraphics[width=0.45\columnwidth]{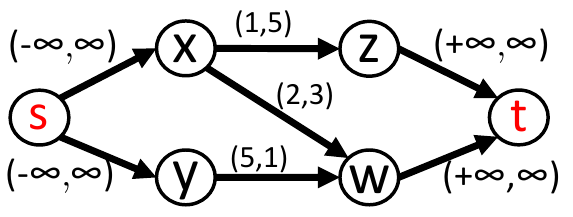}\\
    (a) Example of a DAG&
         (b) Additional source and sink\\
  \end{tabular}
  \caption{Example of synthetic source and sink addition}
  \label{fig:ex_ss}
\end{figure}

\subsection{Greedy flow computation}\label{sec:flow:greedy}
In this section, we define and solve the greedy flow computation
problem in a graph $G$.
To compute the flow $f(G)$ throughout $G$, 
we assume that each vertex $v\in G$
keeps, in a buffer $B_v$, the total quantity received from its incoming
interactions.
We denote by $B^{t_i}_v$ the value of the buffer by time $t_i$.
The source vertex $s$ of $G$ is assumed to constantly have
{\em infinite quantity} in its buffer, i.e.,
$B^{t_i}_s=\infty, \forall t$.
This means that for each interaction $(t_i,q_i)$ that comes out of
$s$, the entire quantity $q_i$ is transferred to the buffer of the destination vertex.
Before the temporally first interaction in
$G$, the buffers of all nodes (except for the source) are $0$.
The flow computation process considers all interactions at the edges
of $E$ {\em in order of time}.
Each interaction, say from vertex $v$ to vertex $u$
greedily transfers
the maximum possible quantity from $B_v$ to $B_u$.
We do not set a bound on how much a node can buffer and buffered
quantities do not expire.
Formally, flow computation is based on the following
definition of greedy flow transfer.

\begin{definition}[Greedy Flow Transfer]\label{def:greedyflowtransfer}
  As an effect of interaction $(t_i,q_i)$ on edge $(v,u)\in G$, $v$
  transfers to $u$ at time $t_i$, a quantity $q=\min\{q_i,B^{t_i}_v\}$,
  where $B^{t_i}_v$ is the total quantity buffered in $v$ by time
  $t_i$.
  As a result of the flow transfer, $B^{t_i}_v$ is
  reduced by $q$ and $B^{t_i}_u$ is increased by $q$.
\end{definition}

In simple words, as a result of a interaction, a node transfers as
much as possible from its buffered quantity via the interaction. 
After
the last interaction in $G$, the total quantity buffered at the 
sink vertex of $G$ is the flow $f(G)$. Formally,

\begin{definition}[Greedy flow of graph $G$]\label{def:flow}
The flow $f(G)$ of a graph $G$ is the total quantity
buffered at the sink of $G$ after processing all interactions in
$G$ in temporal order and applying
for each interaction Definition \ref{def:greedyflowtransfer}
to update the buffered quantities of nodes.
\end{definition}

Table \ref{tab:ex_flow} shows the steps of computing $f(G)$ of the
graph shown in \ref{fig:flowcompex}. Since $s$ is the source node of
the graph, $B_s$ is always $\infty$.
In addition, $t$ is the sink, hence, $f(G)$ will be equal to $B_t$,
after processing all interactions.
The first column shows the currently examined
interaction, the second column the edge where it belongs and the last
four columns the changes in the buffers of the vertices after the
interaction is processed.
In the beginning, $B_s=\infty$ and the buffers of
all other vertices are $0$.
The temporally first interaction
$(1,5)$ on edge $(s,y)$ transfers $5$ units from $B_s$ to $B_y$.
Then, $(2,3)$
on edge $(s,z)$ transfers $3$ units from $B_s$ to $B_z$.
Then, $(3,5)$
on edge $(y,z)$ transfers $\min\{B_y,5\}=5$ units from $B_y$ to $B_z$,
which results in $B_y=0$ and $B_z=8$.
Interaction $(4,4)$ on edge $(y,t)$ transfers no units, as
$\min\{B_y,4\}=0$.
Finally, interaction $(5,1)$ on edge $(z,t)$ transfers
$\min\{B_z,1\}=1$ units from $B_z$ to $B_t$ and the total flow of the
DAG is considered to be $B_t=1$. 


\begin{table}[tbh]
  \caption{Example of  greedy flow computation}
  \label{tab:ex_flow}
  \centering
  \small
  \begin{tabular}{@{}|c |c |c |c |c |c|@{}}
    \hline
    $(t_i,q_i)$ & $(v,u)$& $B_{s}$& $B_{y}$& $B_{z}$& $B_{t}$\\\hline
    $(1,5)$ & $(s,y)$& $\infty$& $5$& $0$& $0$\\
    $(2,3)$ & $(s,z)$& $\infty$& $5$& $3$& $0$\\
    $(3,5)$ & $(y,z)$& $\infty$& $0$& $8$& $0$\\
    $(4,4)$ & $(y,t)$& $\infty$& $0$& $8$& $0$\\
    $(5,1)$ & $(z,t)$& $\infty$& $0$& $7$& $1$\\
   
    \hline
  \end{tabular}
\end{table}

\noindent
{\bf Complexity analysis.}
It is easy to show that the flow $f(G)$ of a graph $G$ can be
computed in time linear to the number of interactions on the edges of
$G$, assuming that these can be accessed in order of time.
This is due to the fact that
each interaction causes the update of at most two vertex buffers, hence,
processing an interaction takes constant time.

\subsection{Maximum Flow computation}\label{sec:maxflowcomp}


The flow transfer definition of the previous section (Definition
\ref{def:greedyflowtransfer}) does not consider the case where,
as a result of an interaction on edge $(v,u)$,
$v$  does not transfer the maximum possible quantity
to $u$, but {\em reserves} quantity for future interactions.
As a result, the quantity $f(G)$ computed by Definition 
\ref{def:flow}
might not be the maximum possible.
To illustrate this, consider again the graph of 
Figure \ref{fig:flowcompex}.
As shown in Table \ref{tab:ex_flow}, due to interaction $(3,5)$,
vertex $y$ transfers all its buffered quantity (i.e., $5$) to $z$,
hence $B_y$ becomes 0 and $B_z$ becomes 8.
As a result, at the temporally
next interaction $(4,4)$, which is on edge $(y,t)$, $y$ cannot transfer any
quantity to $t$ because $B_y=0$ at that time. Had
$y$ transferred to $z$ a quantity of $1$ unit (instead of $5$)
at the interaction  $(3,5)$, $y$
would have saved $4$ units in its buffer to use at interaction
$(4,4)$.
This change maximizes the total flow that is transferred throughout
the graph, as shown in Table \ref{tab:ex_flow2}. In general, the flow
computed by the greedy algorithm can be arbitrarily smaller than the
maximum possible flow.
 
  \begin{table}[tbh]
  \caption{Example of maximum flow computation}
  \label{tab:ex_flow2}
  \centering
  \small
  \begin{tabular}{@{}|c |c |c |c |c |c|@{}}
    \hline
    $(t_i,q_i)$ & $(v,u)$& $B_{s}$& $B_{y}$& $B_{z}$& $B_{t}$\\\hline
    $(1,5)$ & $(s,y)$& $\infty$& $5$& $0$& $0$\\
    $(2,3)$ & $(s,z)$& $\infty$& $5$& $3$& $0$\\
    $(3,5)$ & $(y,z)$& $\infty$& $4$& $4$& $0$\\
    $(4,4)$ & $(y,t)$& $\infty$& $0$& $4$& $4$\\
    $(5,1)$ & $(z,t)$& $\infty$& $0$& $3$& $5$\\
   
    \hline
  \end{tabular}
\end{table}

Hence, assuming that vertices can transfer any portion of their
reserved quantity at an interaction, an interesting problem is finding
the maximum flow that can be transferred from the source to the sink
of the graph $G$. In this section, we analyze this problem and show that
it is equivalent to a maximum flow computation problem in temporal
graphs \cite{DBLP:conf/ciac/AkridaCGKS17},
which can be solved by linear programming (LP).
We show that for specific classes of graphs $G$ (e.g., chains), greedy flow
computation gives us the solution to the maximum flow problem.
In addition, to reduce the cost of 
maximum flow computation,
we propose a preprocessing approach, which eliminates interactions
(and possibly edges and vertices of the graph) which are guaranteed
do not affect the solution.
Finally, we present a graph simplification approach, which computes
part of the solution using the
greedy algorithm and, consequently,
reduce the overall cost  maximum flow computation.


\subsubsection{Formulation as an LP problem and equivalence to a known problem}\label{sec:lp}
We first formulate the maximum flow computation problem as a linear
programming (LP) problem.
The problem includes one variable $x_i$ for each interaction
$(t_i,q_i)$ at any edge.
Variable $x_i$ corresponds to the quantity that will be transferred
as a result of the interaction.
Since the transferred quantity cannot be negative and cannot exceed
$q_i$, we have:
\begin{equation}\label{eq:con1}
0\le x_i\le q_i
\end{equation}
For the special case, where the interaction originates from the source
vertex, we have $x_i=q_i$, since we assume that the source
has infinite buffer (i.e., reducing the units transferred from the
source vertex cannot increase the total quantity that reaches the
sink). Hence, the number of variables can be reduced to the number of
interactions that do not originate from the source.
In addition, we have the constraint that an interaction $(t_i,q_i)$ on
edge $(src_i,dest_i)$
cannot transfer more than the total incoming units to $src_i$ 
minus the total outgoing units from $src_i$, up to timestamp $t_i$:
\begin{equation}\label{eq:con2}
x_i\le \sum_{dest_j=src_i \land  t_j<t_i}x_j-\sum_{src_j=src_i \land  t_j<t_i}x_j
\end{equation}

The objective of the LP problem is to
find the values of all variables $x_i$, which will maximize the quantity
that will arrive at the sink vertex. Hence, the objective is:
\begin{equation}\label{eq:objective}
  \textrm{maximize~}
  \sum_{dest_i=sink(G)}{x_i}
\end{equation}

We will now show that our problem is
equivalent to the 
maximum flow computation problem in temporal
graphs, studied in \cite{DBLP:conf/ciac/AkridaCGKS17}.
Specifically, a temporal flow network, as defined in
\cite{DBLP:conf/ciac/AkridaCGKS17}, each edge has a capacity
and the edge contains a set of time moments during which
the edge can transfer flow up to its capacity (until the next time moment).
Assuming that infinite quantity is available at the source
vertex at time zero, the objective is to find the maximum total
flow
that can reach the sink vertex of the network
after all time moments of edge availabilities have passed.
It is not hard to see that this problem is equivalent to our problem
if we set as time moments the times of the interactions and as
capacities the corresponding quantities $q_i$. As shown in
\cite{DBLP:conf/ciac/AkridaCGKS17}, the problem can be solved
in PTIME and can be converted to a classic max-flow computation
problem in static networks. In the equivalent static network,
for each time moment
of edge activity one edge is added linking versions of the
corresponding vertices. Hence, the complexity of the problem  is
quadratic to the total number of activity time moments on the edges.
Equivalently, computing the maximum flow on a temporal interaction
network (i.e., our problem) has quadratic cost to the number of
interactions on the edges. 


\subsubsection{Graphs for which the greedy algorithm solves the maximum flow
problem}\label{sec:greedyreduction}

Solving our problem directly using LP (or any other max-flow
algorithm) is not as efficient as applying the greedy algorithm, which
computes the flow in time linear to the number of
interactions. However, the greedy algorithm does not always compute
the maximum flow, as we have shown already.
We will now show that for special cases of graphs, the greedy flow
computation algorithm indeed computes the maximum flow. This means that for
such networks, maximum flow computation can be done in time linear to
the number of interactions (assuming that these are sorted by time).

Chains are the first class of graphs where this applies.
A chain is a connected directed acyclic graph (DAG)
for which (i) the source node $s$ has just one
outgoing edge, (ii) the sink node $t$ has just one incoming edge, (iii)
every other node has only one incoming and only one outgoing edge.
In simple words, a chain is a DAG in which all edges form a single
path that connects all nodes.
For example, Figure \ref{fig:greedyopt}(a) shows a chain DAG
consisting of four nodes ($s,y,z,t$) and three edges having
in total 7 interactions on them.

\begin{figure}[tbh]
\centering
\small
\begin{tabular}{@{}c c @{}}
\includegraphics[width=0.40\columnwidth]{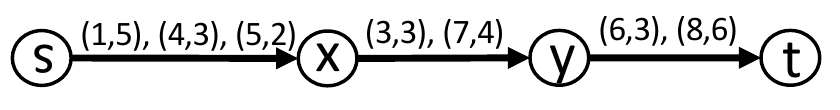}&
 \includegraphics[width=0.46\columnwidth]{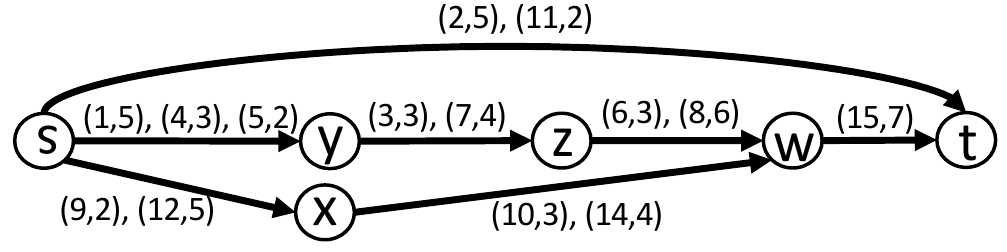}\\
(a) chain DAG&
(b) non-chain DAG\\
\end{tabular}
\caption{Maximum flow computation by greedy}
\label{fig:greedyopt}
\end{figure}

\begin{lemma}\label{lem:chains}
  If $G$ is a chain, the greedy algorithm computes the maximum flow in $G$. 
\end{lemma}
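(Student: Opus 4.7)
The plan is to prove by induction on position along the chain that greedy achieves the maximum \emph{cumulative} flow on every edge at every timestamp. This pointwise strengthening of the usual total-flow statement is essential: in a chain, reserving flow can only delay it, so a shortfall at an early time cannot be recovered if future interactions lack capacity.

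Concretely, I would label the chain vertices as $s = v_0, v_1, \ldots, v_k = t$ and, for any feasible assignment $\sigma$ of values $x_j^\sigma \in [0, q_j]$ to the interactions, define $F_i^\sigma(\tau)$ as the total quantity transferred along edge $(v_i, v_{i+1})$ by time $\tau$. The flow reaching the sink equals $F_{k-1}^\sigma(\infty)$. The invariant to establish is $F_i^g(\tau) \geq F_i^\sigma(\tau)$ for all $i$, $\tau$, and feasible $\sigma$, where $g$ denotes greedy. The base case $i = 0$ is immediate because greedy saturates every interaction leaving the infinite-buffer source, so $F_0^g(\tau)$ already equals the sum of the $q_j$ of all interactions on edge $(s, v_1)$ with $t_j \leq \tau$, which dominates any feasible $F_0^\sigma(\tau)$.

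For the inductive step, the key structural point is that in a chain every intermediate vertex has a single outgoing edge, so feasibility at $v_i$ reduces to $F_i^\sigma(\tau) \leq F_{i-1}^\sigma(\tau)$ for all $\tau$ together with the per-interaction cap $x_j^\sigma \leq q_j$. Enumerating the interactions on edge $(v_i, v_{i+1})$ in time order as $j_1, \ldots, j_m$, greedy's transfer rule yields the recurrence $F_i^g(t_{j_l}) = \min\bigl(F_i^g(t_{j_{l-1}}) + q_{j_l},\, F_{i-1}^g(t_{j_l})\bigr)$, while any feasible $\sigma$ satisfies the same expression with $\leq$ in place of equality. A secondary induction on $l$, exploiting monotonicity of $\min$ in both arguments together with the outer hypothesis $F_{i-1}^g \geq F_{i-1}^\sigma$, propagates the inequality to edge $i$ at every interaction time; piecewise-constant extension between interaction times covers all $\tau$.

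The step I expect to be the main obstacle is identifying the correct invariant in the first place: a naive approach that only compares total flows at $\tau = \infty$ fails because the inductive hypothesis would then be too weak to bound the time-indexed constraint $F_i^\sigma(\tau) \leq F_{i-1}^\sigma(\tau)$ on the next edge. Once the pointwise invariant is installed, the algebra is routine, since $\min(A+q, C)$ is non-decreasing in both $A$ and $C$. Applying the invariant at $i = k-1$ and $\tau = \infty$ then yields the desired conclusion that greedy attains the maximum flow whenever $G$ is a chain.
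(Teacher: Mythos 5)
Your proof is correct and follows essentially the same route as the paper's: an induction along the chain whose invariant is that greedy delivers to each vertex the maximum possible cumulative quantity \emph{at every time moment}, not just at the end. You simply carry out the inductive step more explicitly (via the $\min$-recurrence and the secondary induction over interactions) than the paper's sketch does.
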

\begin{proof} [Sketch]
We will prove the lemma by induction. The lemma trivially holds for
the base case, when
$G$ is a simple edge $(s,t)$.
In this case, the greedy algorithm
sends to buffer $B_t$  the total quantity from all
interactions on the edge $(s,t)$
(since $B_s=\infty$). Hence, $B_t$ has received the
maximum possible flow at
every time moment.
For the inductive step,
consider a chain $G$ with the last edge of the chain
being $(v,t)$.
We will assume that $B_v$ has received the maximum possible
flow (from the previous vertices of the chain)
at any timestamp and prove that
$B_t$ will receive
the maximum possible flow from its incoming edges (i.e., from $v$)
at any timestamp.
Assume that due to an interaction $(t_i,q_i)$ on edge $(v,t)$, instead of
applying the greedy algorithm to transfer the maximum possible flow
$q=\min\{q_i,B^{t_i}_v\}$ from $B^{t_i}_v$ to $B^{t_i}_t$, we transfer
a smaller quantity $q'<q$. We can easily prove that, as a result of
this change, the accumulated flow $B_t$ at $t$, after processing all
interactions cannot increase. The reason is that $t$ receives flow
only from $v$, hence, flow reservation by $v$ cannot increase the
total
flow which will be sent from $v$ to $t$. 
\end{proof}

We can generalize Lemma \ref{lem:chains} and show that the greedy
algorithm computes the maximum flow for DAGs where
the source is the only vertex
which may have more than one outgoing edges.

\begin{lemma}\label{lem:greedyoptimal}
Let $G(V,E)$ be a DAG where the source vertex is $s$ and the sink vertex is
$t$.
The greedy algorithm computes the maximum flow throughout $G$
if for every vertex $v \in V\backslash \{s,t\}$, $v$ has exactly one
outgoing edge.
\end{lemma}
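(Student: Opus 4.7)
The plan is to prove Lemma \ref{lem:greedyoptimal} by a topological induction that generalizes the chain argument of Lemma \ref{lem:chains}. The structural observation driving the proof is that, since every internal vertex $v \in V \setminus \{s,t\}$ has exactly one outgoing edge, the entire schedule of outflows from $v$ is determined by a single interaction sequence on that unique edge; hence the only decision at $v$ is how much to release at each interaction, with no routing choice between competing downstream targets. For any vertex $v$ and time $\tau$, let $f_v^G(\tau)$ and $g_v^G(\tau)$ denote the cumulative inflow and outflow at $v$ up to time $\tau$ under greedy, and $f_v^A(\tau), g_v^A(\tau)$ the same quantities under an arbitrary feasible alternative strategy $A$ (i.e., any assignment of $x_i$ satisfying constraints \eqref{eq:con1} and \eqref{eq:con2}). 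I would process the vertices in a topological order $s = v_0, v_1, \dots, v_n = t$ and prove by induction that $f_{v_k}^G(\tau) \geq f_{v_k}^A(\tau)$ for every $k$ and every $\tau$. The base case $v_0 = s$ is vacuous, and specializing the final claim to $v_n = t$ is exactly the lemma.

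The key inner step is a single-vertex dominance lemma: if the inflow schedule at an internal vertex $u$ satisfies $f_u^G(\tau) \geq f_u^A(\tau)$ for all $\tau$, then greedy's cumulative outflow along $u$'s unique outgoing edge also pointwise dominates, i.e., $g_u^G(\tau) \geq g_u^A(\tau)$. This I would prove by a short induction on event times $\tau_1 < \tau_2 < \cdots$ (inflow arrivals and outgoing interactions of $u$). At an inflow event, $g_u$ does not change under any strategy, so dominance is preserved. At an outgoing interaction with quantity $q_j$, greedy sets $g_u^G(\tau_j) = \min\{g_u^G(\tau_{j-1}) + q_j,\; f_u^G(\tau_j)\}$, while any feasible alternative satisfies $g_u^A(\tau_j) \leq \min\{g_u^A(\tau_{j-1}) + q_j,\; f_u^A(\tau_j)\}$; both arguments of the $\min$ dominate pointwise by the inner inductive hypothesis and the outer topological hypothesis on $f_u$, and $\min$ is monotone in each argument, which closes the induction.

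The topological step then follows immediately: the inflow to $v_k$ equals the sum of outflows from its predecessors along edges entering $v_k$; for a predecessor $s$, that contribution is the fixed quantity $\sum q_i$ regardless of strategy (greedy by assumption, and maximal among feasible alternatives since reducing a source interaction can never raise downstream buffers); for an internal predecessor $u$, by construction $u$'s unique outgoing edge enters $v_k$, so the contribution equals $g_u(\tau)$, which is greedy-dominated by the inner lemma applied with the outer hypothesis on $f_u$. Summing dominating quantities yields $f_{v_k}^G \geq f_{v_k}^A$ pointwise. The one genuinely delicate step, which I expect to be the main obstacle, is the inner outflow-dominance lemma; everything else is bookkeeping. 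It is worth emphasizing that its proof uses in an essential way the uniqueness of $u$'s outgoing edge: with two outgoing edges, committing buffered flow greedily to the temporally next interaction can strand capacity on a downstream branch that would have been more profitable, exactly the failure mode illustrated in Figure \ref{fig:flowcompex}.
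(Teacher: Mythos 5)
Your proof is correct, and it takes a genuinely more rigorous route than the paper's own argument. The paper's proof is a short perturbation sketch: it supposes that some internal vertex $v$ with unique outgoing edge $(v,u)$ retains quantity at one interaction instead of transferring greedily, and observes that this cannot increase what eventually reaches $u$ (and hence $t$), since $u$ is the only way out of $v$; it then concludes that greedy is optimal. You instead compare greedy directly against an \emph{arbitrary} feasible solution $A$ and establish pointwise-in-time dominance $f_v^G(\tau)\ge f_v^A(\tau)$ of cumulative inflows by an outer induction on topological order, with the inner single-vertex lemma (greedy's update $g_u^G(\tau_j)=\min\{g_u^G(\tau_{j-1})+q_j,\,f_u^G(\tau_j)\}$ dominating $g_u^A$ by monotonicity of $\min$ in each argument) doing the real work. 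What your version buys is completeness: the paper's exchange argument shows that a single local deviation does not help and leaves implicit why no \emph{combination} of deviations across many vertices and interactions can help, whereas your dominance induction handles all feasible strategies uniformly and makes the role of the unique-outgoing-edge hypothesis explicit (it is what lets you identify a predecessor's entire outflow with its inflow contribution to $v_k$, and what removes any routing choice from the inner lemma). What the paper's version buys is brevity and intuition; both ultimately hinge on the same structural fact, and your closing remark correctly isolates where the argument fails with two outgoing edges. One small point to tighten: when an incoming and an outgoing interaction at $u$ share a timestamp, $f_u(\tau_j)$ in the inner lemma must be read as cumulative inflow strictly before $\tau_j$, matching the strict inequality $t_j<t_i$ in constraint (\ref{eq:con2}); otherwise the min-recurrence does not quite coincide with the algorithm's update.
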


\begin{proof} [Sketch]
Similarly to the proof of Lemma \ref{lem:chains}, assume that
a vertex
$v \in V\backslash \{s,t\}$ having outgoing edge $(v,u)$
does not transfer the maximum possible flow as a result of an
interaction $(t_i,q_i)$ on $(v,u)$, but retains some quantity.
This cannot increase the total quantity that reaches $u$ (and eventually $t$)
via $v$ in future interactions  $(t_j,q_j), t_j>t_i$ stemming from $v$
because $(v,u)$ is the only outgoing edge from $v$ and $t$ can be
reached from $v$ only via $u$.
In addition, there is no benefit in retaining quantities at the source
vertex $s$.
Hence, greedily transferring the maximum possible quantity
at every interaction, results in accumulating the maximum flow at the
sink $t$.
\end{proof}

Figure \ref{fig:greedyopt}(b) shows an example, where the Greedy
algorithm computes the maximum flow ($=14$). Note that all vertices except for
the source $s$ and the sink $t$ have just one outgoing edge.
Checking whether the graph satisfies this condition costs just $O(V)$
time, i.e., examining the out-degree of each vertex.
On the other hand, we can easily construct examples of graphs that
do not satisfy this condition and for which Greedy does not compute the
maximum flow (like the graph in Figure \ref{fig:flowcompex}).

\subsubsection{Graph preprocessing}\label{sec:flowcomp:prepdag}
Before applying LP to compute the maximum flow on a DAG for which the
Greedy algorithm is not guaranteed to find the maximum flow,
i.e., a DAG that does not satisfy the condition of
Lemma \ref{lem:greedyoptimal}, we can reduce the
complexity of the problem by removing interactions that do not affect
the solution. 
For example, consider the pattern instance
of Figure \ref{fig:example}(c) and the last edge $(u_3,u_1)$ of this
DAG. On this edge, there is an interaction $(1,\$2)$ which obviously
does not account in the flow computation and can be ignored. The
reason is that the timestamp of this interaction is smaller than all
the timestamps of all interactions that enter $u_3$, i.e., the source
node of interaction $(1,\$2)$. In simple words, it is impossible for
$u_3$ to transfer $\$2$ at timestamp 1 to $u_1$, because by that time
it is not possible to have received any money from its incoming
interactions. Removing interactions can be crucial to the performance
of LP because there are as many variables
as the number of interactions in the DAG (except those originating
from the source vertex). 

Hence, based on the observation above, before applying LP, we
perform a {\em preprocessing step} on the DAG $G$, where we eliminate interactions
that cannot contribute to the maximum flow. Specifically, we consider
all vertices of $G$ in a {\em topological order} and for each vertex,
which is not the source or the sink of the DAG, we
examine its outgoing edges and remove from them all interactions with
a smaller  timestamp than the smallest incoming timestamp to the
vertex. The reason of examining the vertices in a topological order is
that the deletion of an interaction may trigger the deletion of
interactions in edges that follow. Examining the vertices in this
order guarantees updating the graph by a single pass over its
vertices.

For example, consider the DAG $G_1$ shown in Figure
\ref{fig:dagpre}(a).
To preprocess $G_1$, we consider its vertices
a topological order, i.e., $\{s,x,y,z,t\}$.
For each vertex having both incoming and outgoing edges
we attempt to
delete transactions from its outgoing edges.%
\footnote{We cannot eliminate any interactions from the source vertex
  of the DAG.}
The first such vertex is
$x$. First, we find the minumum timestamp of any incoming interaction
to $x$, which is $5$. Then, we examine the interactions on outgoing edge
$(x,y)$. From them, interaction $(2,7)$ is deleted because $2<5$.
Then, we examine the interactions on outgoing edge
$(x,z)$. From them, interaction $(1,2)$ is deleted because $1<5$.
We move on to vertex $y$. The minimum timestamp of any interaction
entering $y$ is now $9$ (recall that interaction $(2,7)$ on edge
$(x,y)$ has been deleted). This causes interaction $(3,3)$ on outgoing
edge $(y,t)$ from $y$ to be deleted.
Finally, we move on to vertex $z$; the incoming interaction
to $z$ with the minimum timestamp is $(10,5)$.
This causes interaction  $(4,2)$ on outgoing
edge $(z,t)$ from $z$ to be deleted.
The DAG after preprocessing is shown in Figure \ref{fig:dagpre}(b).

\begin{figure}[tbh]
\centering
\small
\begin{tabular}{@{}c c @{}}
\includegraphics[width=0.48\columnwidth]{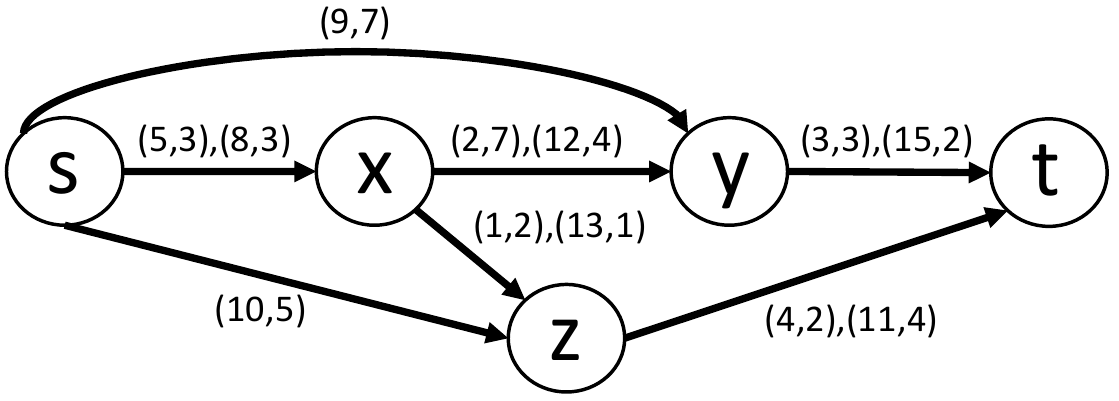}&
\includegraphics[width=0.48\columnwidth]{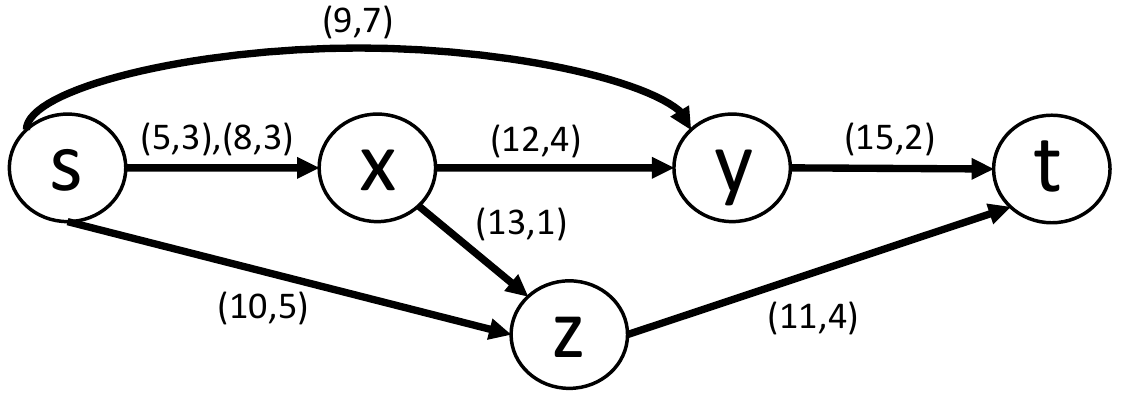}\\
(a) DAG $G_1$ before&
(b) DAG $G_1$ after\\
\includegraphics[width=0.48\columnwidth]{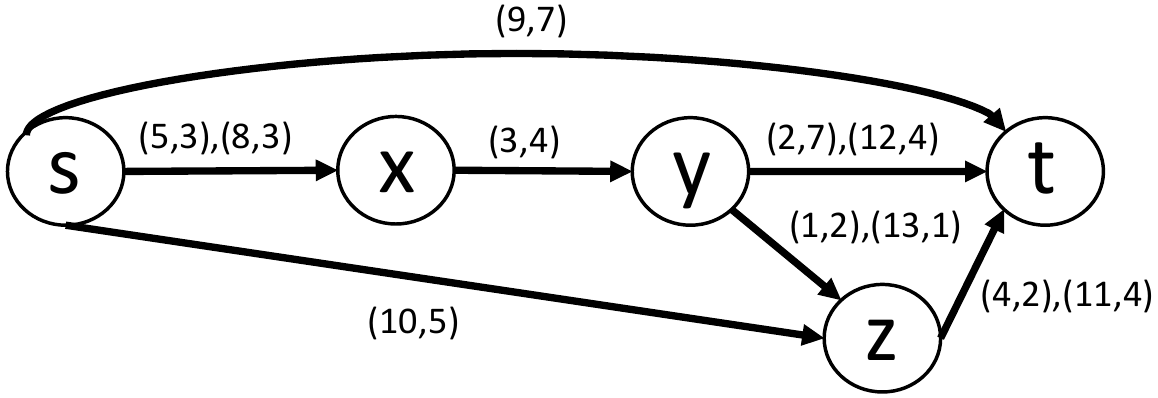}&
\includegraphics[width=0.48\columnwidth]{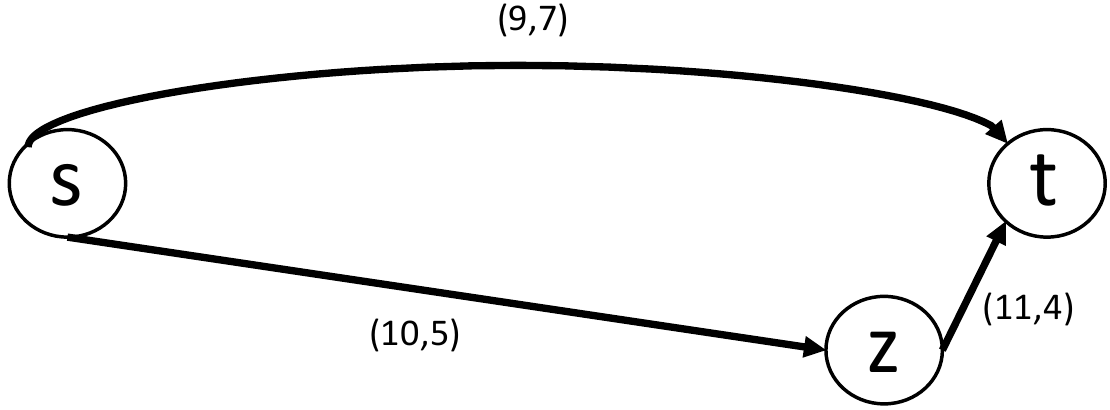}\\
(a) DAG $G_2$ before&
(b) DAG $G_2$ after\\
\end{tabular}
\caption{DAG preprocessing examples}
\label{fig:dagpre}
\end{figure}

The DAG preprocessing procedure described above may cause all interactions on an
edge to be deleted. In this case, the edge cannot transfer any
flow and hence should be deleted.
The deletions of edges may result in deletion of nodes and, in turn,
trigger the deletion of other edges and nodes.
We will now see how the deletion of edges can trigger the
simplification of the graph, which can greatly reduce the cost of
maximum flow computation.

The deletion of an edge $(v,u)$ may
have two effects: (i) the number of incoming edges to vertex $u$ becomes 0,
(ii) the number of outgoing edges from vertex $v$ becomes 0.
As described above,
the deletion of edge $(v,u)$ may happen when we examine vertex
$u$. Hence, case (i) can be handled when we examine vertex $u$, which
follows $v$ in the topological order. Specifically, if the currently
examined vertex $u$ has no incoming edges (the DAG's source vertex is not
examined, hence it is an exception here), then this means that no
quantity from the source of the DAG can flow through $u$
to the sink vertex of the DAG. Hence, {\em $u$ and all its outgoing edges
should be removed from the DAG}. If the removal of an edge $(u,u')$ makes its
destination vertex $u'$ to have no incoming edges, then this outcome
will be handled when $u'$ will be examined ($u'$ must follow $u$ in
the topological order).

If, after the deletion of an edge $(v,u)$, case (ii) applies, i.e.,
$v$ has no more outgoing edges, then this means that no flow can reach
the sink of the DAG via $v$.
Hence, $v$ and all its incoming edges should be deleted.
The deletion of an incoming edge $(w,v)$
may cause vertex $w$
to have no outgoing edges, in which case $w$ should also be deleted.
The deletion of $w$ should be done immediately, because $w$
precedes $v$ in the topological order and will not be examined later.
The deletion may trigger the deletion of other nodes and edges recursively.

Figures \ref{fig:dagpre}(c) and \ref{fig:dagpre}(d) show an example of
a DAG $G_2$ before and after preprocessing.
The vertices are examined in topological order $\{s,x,y,z,t\}$. Since
$s$ is the source and $t$ is the sink, only vertices $\{x,y,z\}$ are
examined in this order. We first examine $x$ and remove the single interaction
$(3,4)$ from its outgoing edge $(x,y)$, since $3<\min\{5,8\}$. This causes
edge $(x,y)$ to be deleted, which makes $x$ having no outgoing
edges. Hence, $x$ and all its incoming edges should be deleted as
well.
The next vertex to be examined is $y$, which has no incoming edges
(since edge $(x,y)$ has been deleted). Hence, $y$ and its outgoing edges
are deleted. The next vertex to be examined is $z$ and interaction
$(4,2)$ is removed from edge $(z,t)$. The final graph is shown in
\ref{fig:dagpre}(d). Note that this graph is soluble by Greedy; hence,
if DAG preprocessing removes edges from the graph, we apply again the
condition of Lemma \ref{lem:greedyoptimal} to check if the resulting
DAG is soluble by Greedy.

A pseudocode for this DAG preprocessing procedure is Algorithm
\ref{algo:dagpre}.
The algorithm can significantly reduce the size of the problem, by
removing interactions, edges, and nodes. 
In the case where the source or the sink of the DAG is removed,
the DAG has 0 flow,
which means that we can avoid running the flow computation
algorithm.
The source vertex can be deleted in the case where the deletion of a
vertex propagates upwards until the source.
The sink node can be deleted if all its incoming edges are deleted.
In any case, after preprocessing, the resulting DAG should be
connected and all vertices which are not the source and the sink
should have at least one incoming and at least one outgoing edge.
The complexity of Algorithm
\ref{algo:dagpre} is linear to the number of interactions, as for each
examined edge its interactions are processed at most once (from the
temporally earliest to the latest) \cite{cormen2009introduction}.
Each edge is checked for deletion
at most twice (once as an outgoing edge and at most once as an
incoming edge).
Topological sorting of the vertices (in the beginning of the
algorithm)
examines each edge of the DAG once.
Hence, the algorithm is very fast and can potentially result in
significant cost savings in maximum flow computation, as we
demonstrate in Section \ref{sec:experiments}. 

\begin{algorithm}
\begin{algorithmic}[1]
\small
\Require DAG $G(V,E)$
\State define topological order for $G$'s vertices
\For{each vertex $v\in V\backslash\{s,t\}$ in topological order}
  \If{$v$ has no incoming edges}
      \State delete all outgoing edges from $v$
      \State delete $v$ from $V$
  \Else
      \State $mintime=\min_{(w,v)\in E}\{\min_{(t,q)\in (w,v)_S} t\}$
      \For{each $(v,u)\in E$}
          \For{each $(t,q)\in (v,u)_S$}
              \If{$t<mintime$}
               \State delete $(t,q)$ from $(v,u)_S$
              \EndIf
          \EndFor
          \If{$(v,u)_S=\emptyset$} \Comment{all interactions deleted}
               \State delete $(v,u)$ from $E$
          \EndIf
      \EndFor
      \If{$v$ has no outgoing edges}
         \State delete $v$ from $V$ 
         \State delete from $E$ all edges $(w,v)$ incoming to $v$ and
         \State recursively delete all $w\in V$ with no outgoing edges
      \EndIf
  \EndIf
\EndFor
\end{algorithmic}
\caption{DAG preprocessing algorithm}
\label{algo:dagpre}
\end{algorithm}

\subsubsection{Graph simplification}\label{sec:simplification}
Before applying LP, we also propose a {\em graph
  simplification} approach that can reduce the cost of maximum flow computation.
This approach is based on our observation that chains
which originate from the source vertex can
be reduced to single edges.
In a nutshell, graph simplification iteratively identifies and reduces
such chains by applying the greedy algorithm on them, until no further
reduction can be performed. The resulting graph is then solved using LP.

We start by showing that any chain that starts from the source of the
graph can be converted to a single edge without affecting the
correctness of maximum flow computation in the graph.
The interactions on the single edge that replaces the chain
are all interactions that enter the sink (i.e., the destination
vertex) of the chain and result in
increasing its buffered quantity.
For example, the entire
chain  of Figure \ref{fig:greedyopt}(a) can be
reduced to a single edge $(s,t)$ with interactions $\{(6,3),(8,4)\}$.
To derive this edge, we have to run the greedy algorithm on the graph
and define one interaction on $(s,t)$ for each interaction $(t_i,q_i)$ in $(y,t)$
that increases buffer $B_t$.
The defined interaction on is $(t_i,\min\{q_i,B_y^{t_i}\})$.
Each such interaction
corresponds to transferring a quantity from $s=source(G)$ to $t$ through
the other nodes. Hence, at any time moment, $B_t$ in $G$ is equivalent
to $B_t$ in the transformed graph.
In general, the following lemma holds.

\begin{lemma}\label{lem:simplification}
  Let $G$ be an interaction network and let $s$ be the source node of $G$.
  Assume a chain of $k$ vertices  $sv_1v_2\dots v_k$,
  i.e.,  for each $v_i, i<k$, the in- and out-degree of $v_i$ is 1.
  Then, $G$ can then be reduced to a graph $G'(V',E')$,
  where $V'=V-\{v_1,v_2,\dots,v_{k-1}\}$ and
  $E'=E-\{(s,v_1),(v_1,v_2),\dots,$\linebreak$(v_{k-1},v_k)\}+(s,v_k)$.
  The interactions on the new edge $(s,v_k)$ are those that
  determine the total quantity buffered at $v_k$ after running the
  greedy algorithm on chain $sv_1v_2\dots v_k$.
  Then, the maximum flow throughout $G$ is equal to the
  maximum flow throughout $G'$.
\end{lemma}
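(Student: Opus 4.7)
The plan is to reduce the statement to a pointwise-optimality claim about greedy on the chain, and then use this claim to establish both directions of the max-flow equality between $G$ and $G'$. Because every intermediate vertex $v_1,\dots,v_{k-1}$ has in-degree and out-degree one, the chain $sv_1\cdots v_k$ behaves, from the outside, as a single conduit whose only observable effect on the rest of $G$ is the time-indexed sequence of arrivals at $v_k$. So the goal is to show (i) that greedy on the chain produces a \emph{pointwise-dominant} arrival schedule at $v_k$ and (ii) that replacing the chain with an edge $(s,v_k)$ whose interactions reproduce exactly this schedule preserves the set of achievable flows downstream from $v_k$.

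\textbf{Key claim and induction.} First I would prove the following strengthening of Lemma~\ref{lem:greedyoptimal} for chains: running greedy on $sv_1\cdots v_k$ (viewed in isolation, with $v_k$ treated as a sink) yields, at every timestamp $t$, the \emph{maximum} possible cumulative quantity $B_{v_k}^t$ over all feasible flow-transfer strategies on the chain. The argument is by induction on $k$. The base case $k=1$ is immediate because the source has infinite buffer, so every interaction on $(s,v_1)$ is fully transferred under greedy and no strategy can do better. For the inductive step, assume greedy on $sv_1\cdots v_{k-1}$ maximizes $B_{v_{k-1}}^t$ at every time. Then for any interaction $(t_i,q_i)$ on the last edge $(v_{k-1},v_k)$, greedy transfers $\min\{q_i,B_{v_{k-1}}^{t_i}\}$, which is the largest amount any strategy could ever transfer at that moment (since any competing strategy has $B_{v_{k-1}}^{t_i}$ no greater than greedy's by IH, given that \emph{more} cumulative outflow from $v_{k-1}$ can only come from \emph{more} cumulative inflow, which greedy already maximizes). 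A standard exchange argument at the earliest time of disagreement then shows that no alternative can beat greedy pointwise at $v_k$ either.

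\textbf{From the claim to the lemma.} Given the claim, define the edge $(s,v_k)$ in $G'$ by including, for each interaction on $(v_{k-1},v_k)$ that strictly increases $B_{v_k}$ under greedy, an interaction at the same timestamp carrying exactly that increment. Since $(s,v_k)$ originates at the source, every such interaction is fully transferred in $G'$, so the arrival schedule at $v_k$ in $G'$ coincides pointwise with the greedy arrival schedule at $v_k$ in $G$. Now I would establish the two inequalities. For $\mathrm{maxflow}(G)\le\mathrm{maxflow}(G')$: take any feasible flow in $G$; its cumulative inflow into $v_k$ by time $t$ is at most the greedy value by the claim, so the same downstream decisions (on the outgoing edges of $v_k$, and on every edge outside the chain) remain feasible in $G'$, giving the same total flow into the sink. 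For $\mathrm{maxflow}(G')\le\mathrm{maxflow}(G)$: take any feasible flow in $G'$ and mimic it in $G$ by running greedy on the chain (which produces exactly the same arrival schedule at $v_k$) and copying the remaining variable values; feasibility and total flow are preserved.

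\textbf{Main obstacle.} The routine parts are the construction of $G'$ and the two mimicking arguments on either side of $v_k$. The delicate step is the inductive claim that greedy maximizes $B_{v_k}^t$ \emph{at every} $t$, not merely in aggregate as in Lemma~\ref{lem:greedyoptimal}. The subtlety is that a competing strategy could in principle reserve at $v_{k-1}$ to boost a later arrival at $v_k$, and one must rule this out by the exchange argument together with the fact that the only outlet of $v_{k-1}$ is $(v_{k-1},v_k)$, so any reservation at $v_{k-1}$ can be "undone" by pushing earlier without violating any capacity, thereby weakly increasing the cumulative arrival curve at $v_k$.
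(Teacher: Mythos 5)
Your proof is correct and follows essentially the same route as the paper's (much briefer) proof sketch: the paper likewise argues that reserving flow at $s$ or at the intermediate chain vertices cannot increase the flow reaching the sink, so greedy's arrival schedule at $v_k$ coincides at every time with that of the new edge $(s,v_k)$. Your write-up is simply more rigorous, isolating the pointwise-dominance claim as an explicit induction and spelling out both directions of the mimicking argument, which the paper leaves implicit.
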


\begin{proof} [Sketch]
Recall that reserving flow in the source vertex $s$ of $G$ cannot
increase the maximum flow that reaches its sink.
The same holds for all vertices $\{v_1,v_2,\dots,v_{k-1}\}$
in a chain $sv_1v_2\dots v_k$
that originates from the
source $s$, except from the last vertex $v_k$.
Hence, by running the greedy algorithm and replacing chain\linebreak
$sv_1v_2\dots v_k$ by an edge $(s,v_k)$ having
all interactions that increase buffer $B_{v_k}$ does not affect the
correctness of maximum flow computation in $G$,
as the quantity received by $v_k$ via chain
$sv_1v_2\dots v_k$
at any time is equivalent to the
quantity received by $v_k$ via the new edge $(s,v_k)$ at any time.
\end{proof}

Figure \ref{fig:simplification}
illustrates
the effect of Lemma \ref{lem:simplification} and exemplifies our 
simplification approach.
Assume that the initial graph is shown in Figure
\ref{fig:simplification}(a).
After reducing the two chains that originate from the sink to edges,
the graph is simplified as shown in Figure
\ref{fig:simplification}(b). Note that the reduction of chain $syx$
introduces a new edge $(s,z)$ with interactions $\{(3,2),(7,1)\}$,
however, an edge $(s,z)$ already exists in the graph with interactions
 $\{(2,5),(11,2)\}$. In such a case,
the two edges are {\em merged} to a single edge with all four
interactions as shown in Figure \ref{fig:simplification}(c).
After the merging, a new chain $szw$ that originates from the source
$s$ is created. This chain is then reduced to single edge $(s,w)$ as
shown in  Figure \ref{fig:simplification}(d).
At this stage the graph cannot be simplified  any further, so we compute
its maximum flow using LP. Note that the LP optimization problem of
the initial graph in Figure \ref{fig:simplification}(a) has 9 variables
(as many as the interactions that do not originate from $s$), whereas
the reduced graph in Figure \ref{fig:simplification}(d) has only 3
variables. This demonstrates the reduction to the cost of solving the
problem achieved by our graph simplification approach.

\begin{figure}[tbh]
\centering
\small
\begin{tabular}{@{}c c @{}}
\includegraphics[width=0.48\columnwidth]{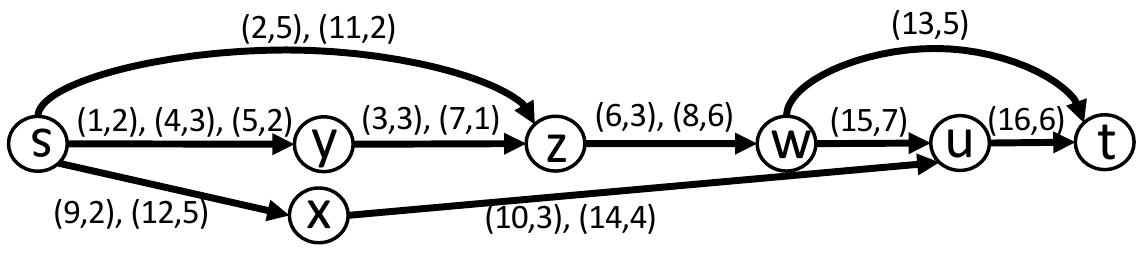}&
\includegraphics[width=0.48\columnwidth]{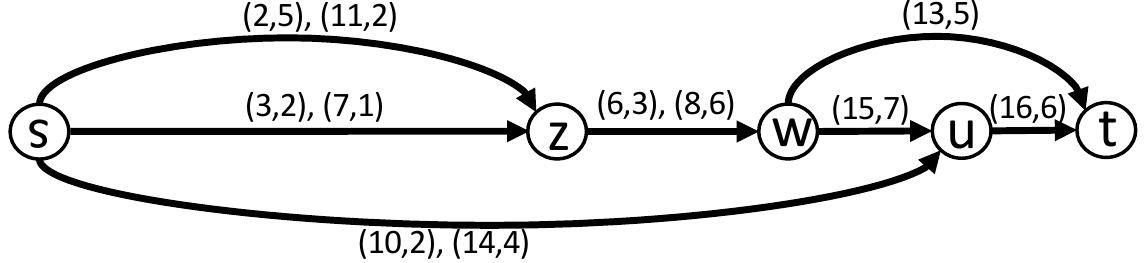}\\
(a) initial graph&
(b) first chain reduction\\
\includegraphics[width=0.48\columnwidth]{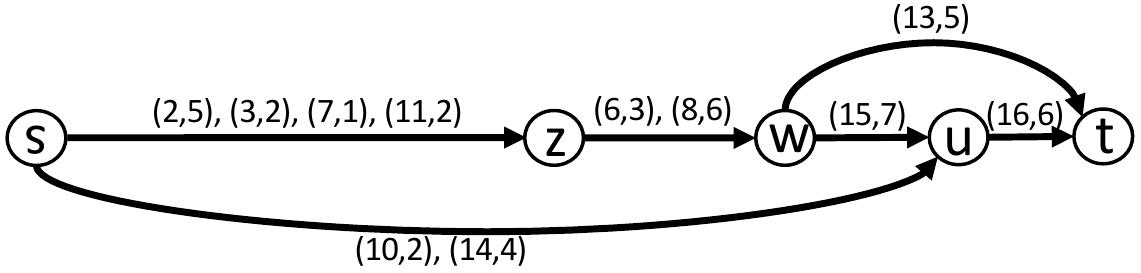}&
\includegraphics[width=0.48\columnwidth]{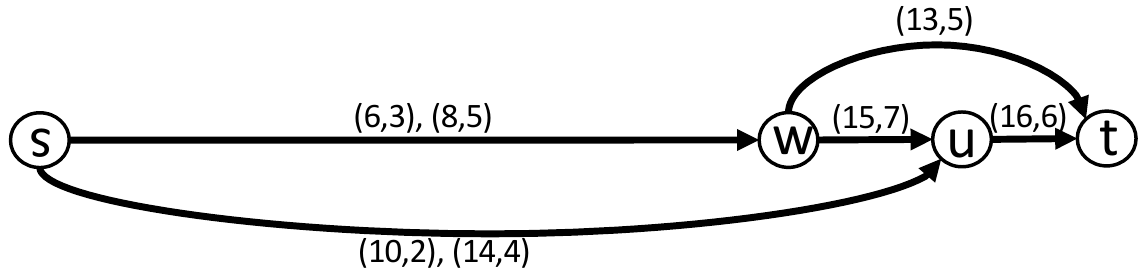}\\
(a) edge merging&
(b) second chain reduction\\
\end{tabular}
\caption{Example of graph simplification}
\label{fig:simplification}
\end{figure}

A pseudocode for the proposed graph simplification approach is Algorithm
\ref{algo:dagsimplification}.
Since each edge is examined just once before being reduced,
the complexity of the algorithm
is linear to the number of interactions on all edges that are removed
(i.e., those processed by executions of the greedy algorithm) and,
overall, linear to the number of interactions in the graph.
On the other hand, simplification can result in significant cost savings
in maximum flow computation, as already discussed and as we
demonstrate in Section \ref{sec:experiments}. 

\begin{algorithm}
\begin{algorithmic}[1]
\small
\Require Graph $G(V,E)$
\While{$G$ contains a chain $C$ originating from source $s$}
    \State run Greedy to simplify chain $C$ to interaction set $I$
    \State remove edges $\{(s,v_1), (v_1,v_2),\dots (v_2,v_k)\}$ from $E$ 
    \If $(s,v_k)\notin E$ \Comment{edge $(s,v_k)$ does not exist}
      \State add edge $(s,v_k)$ to $E$
    \EndIf
    \State add $I$ to interaction set of edge $(s,v_k)$
\EndWhile
\end{algorithmic}
\caption{Graph simplification algorithm}
\label{algo:dagsimplification}
\end{algorithm}

\section{Flow Pattern Enumeration}\label{sec:algorithm}
In the previous section, we have discussed the problem of computing
the flow throughout a subgraph of the interaction network.
We now turn our attention to flow pattern search in large graphs.
As defined in Section \ref{sec:def}, a pattern  $G_P(V_P,E_P)$ is a DAG
and its instances are subgraphs of the input graph $G(V,E)$.
To compute the flow throughout an  instance of a pattern, we can use
the algorithms presented in Section \ref{sec:flowcomp}.
In this section, we present techniques for finding the pattern instances and their flows.
As discussed
in Section \ref{sec:related},
pattern matching is a well-studied problem,
but most previously proposed techniques apply on labeled graphs and
all of them disregard flow computation.
Our goal here is to demonstrate that
the enumeration of pattern instances and
their flows in an interaction network can
greatly benefit from a simple graph
preprocessing technique.
Before discussing it, we present a baseline graph browsing approach.

\subsection{A graph browsing approach}\label{sec:algo:direct}
A direct approach to solve the pattern search problem traverses the
graph, trying to identify matches of the pattern $G_P$ by expanding
{\em partial matches} of $G_P$.
As discussed in previous work \cite{DBLP:journals/pvldb/SunWWSL12},
graph browsing could be the most efficient approach, especially for
pattern search in unlabeled graphs, where the number of instances can
be numerous.
Specifically, in graph browsing, the vertices of $G_P$
are considered in a topological order. Starting from the source vertex
of $G_P$, for each vertex $v_P\in G_P$,
$v_P$ is mapped to a vertex $v\in G$,
making sure that all structural and mapping  ($\mu$) constraints
w.r.t. all previously instantiated vertices are satisfied. 
For example, consider the pattern $G_P$ of Figure
\ref{fig:example}(b) and the graph $G$ of Figure \ref{fig:example}(a).
To find all matches of $G_P$ in $G$, we instantiate the first vertex
$a$ of $G_P$ to each of the four vertices of $G$ and for each instance
of $a$, we perform graph browsing to gradually ``complete'' possible
matches (using a backtracking algorithm). That is, from $a=u_1$, we
follow the outgoing edge of $u_1$ to instantiate $b=u_2$; then, the
outgoing edge of $b=u_2$ to instantiate $c=u_3$; then, the first
outgoing edge of $c=u_3$ to instantiate the sink vertex $a=u_1$, which
gives us the pattern match $u_1u_2u_3u_1$. Then, we backtrack and try
the instantiation $a=u_4$, which fails, because $u_4\neq u_1$ (recall that
$u_1$ is already mapped to the source $a$ of $G_M$).
For each pattern instance computed by this method, we can use the
approaches proposed in Section \ref{sec:flowcomp} to compute the
corresponding flow.

Note that for certain patterns, like the chain pattern of Figure
\ref{fig:example}(b), for which the maximum flow can be computed by
the greedy algorithm, we can compute the maximum flows of
their instances by gradually computing the interactions that determine
the maximum flows of their partial matches
(similarly to the simplification approach that we have
proposed in Section
\ref{sec:simplification}). For example, after finding the partial match
$u_1u_2u_3$, we apply the greedy approach to derive the set of
interactions $\{(3,\$4), (5,\$2)\}$, which determine the maximum flow
into $u_3$ originating from $u_1$ at any time moment.
After we expand to complete the match $u_1u_2u_3u_1$,
we can compute its flow {\em incrementally},
from the set of interactions $\{(3,\$4), (5,\$2)\}$ into $u_3$,
by using only this set and the interactions on edge $u_3u_1$ in the
greedy algorithm. If
$u_1u_2u_3$ was expanded to another pattern match, we could still use
the same set $\{(3,\$4), (5,\$2)\}$ to compute its flow incrementally,
without having to run the greedy
algorithm for the entire set of interactions in the new instance. 

The advantage of the graph browsing pattern enumeration approach is
that it is a general method that does not require any precomputed
information. At the same time, it is expected to be reasonably
efficient, because there is not much room for pruning vertices as not
being candidates to be mapped to pattern vertices (recall that graph
vertices are {\em unlabeled} and mapping is only based on
equality/inequality constraints to other mapped vertices).
In the next subsection, we propose a graph preprocessing
approach that facilitates faster pattern enumeration.  

\subsection{A preprocessing-based approach}\label{sec:algo:prepro}
We
assume that the graph $G$ is static (i.e., it contains historical
data).%
\footnote{For the case of graphs which grow over time,
we can apply delta-updates to the precomputed data, to consider
interactions that enter $G$ after the initial precomputation.}
We propose the preprocessing of $G$ and the extraction from it instances
of certain subgraphs that can help in identifying instances of larger
patterns that include the subgraphs.
The intuition behind this approach is that we can avoid searching for
subgraphs of the pattern from scratch; instead, we can retrieve the
pattern's structural components (and precomputed flow data) and then
``stitch'' them together using join algorithms.
This is not a new idea, as the extraction and indexing of subgraphs in
order to facilitate graph pattern search has been used in several
studies \cite{DBLP:journals/pvldb/SunWWSL12, DBLP:conf/icde/ChengYDYW08}.
Here, we employ the idea in the context of flow pattern enumeration.

\stitle{Path Precomputation.}
The subgraphs we precompute are paths up to a certain length
(i.e., up to $k$ hops).
We form one table for each length, holding all paths of that length.
That is, for each path, we store:
(i) the sequence of vertex-ids that form the path, (ii) the
sequence of interactions $e_S$
that enter the buffer $B_t$ of the sink $t$ of the path,
after applying the greedy algorithm;
$e_S$ determines the flow from the
source of the path to the sink at any time moment. 

\stitle{Enumeration of Pattern Instances.}
To enumerate all pattern instances using the precomputed tables,
the first step is to identify precomputed path subpatterns in $G_P$
and access and join the corresponding tables, in order to form either
complete instances of $G_P$, or partial ones  if complete instances
cannot be derived simply by combining paths from the accessed
tables.
In the latter case,
we use the graph representation to verify the existence of any missing
edges in the partial instance and/or to expand from the partial instances
and include missing vertices and edges (or determine that the partial
instance cannot be expanded to a complete one).
As soon as a complete pattern match is identified, we compute the flow
of the graph. While doing so, we use any
precomputed flows from the tables wherever possible to avoid flow computations. 

Consider, for example, the flow pattern $G_P$ shown in Figure
\ref{fig:fp_exam}(a).
Assume that we have preprocessed and have available all instances of
two-hop and three-hop cyclic paths that start from and end to the same node
$a$ in two tables  $L_2$ and $L_3$, respectively.
In this case, we can easily compute all instances of $G_P$, by only accessing
and using preprocessed data. Specifically, if the preprocessed paths
are sorted by vertex-id,%
\footnote{This is easy to achieve if the paths are
  computed by a DFS algorithm that 
  considers the graph vertices as starting vertices of the DFS in sorted order.}
it suffices to scan $L_2$ and $L_3$ and merge-join them, in order to
find all pairs of paths from $L_2$ and $L_3$ that have the same start
(and hence end) vertex. For each such pair, we verify the remaining
constraint (that $b$ and $c$ are mapped to vertices different than the one
whereto $e$ is mapped).
Finally, to compute the total flow of the resulting pattern instance,
we sum up all precomputed incoming flows to the sinks of the two paths.

\begin{figure}[h]
\centering
\small
\begin{tabular}{@{}c c @{}}
\includegraphics[width=0.28\columnwidth]{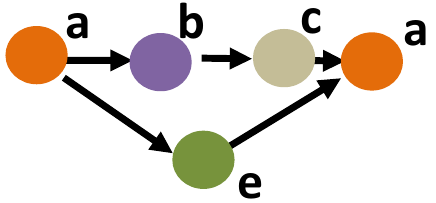}&
\includegraphics[width=0.28\columnwidth]{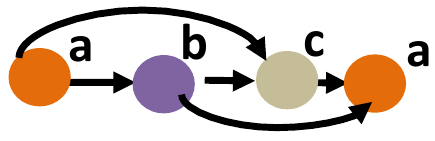}\\
(a) easy pattern&
(b) hard pattern\\
\end{tabular}
\caption{Examples of flow patterns}
\label{fig:fp_exam}
\end{figure}

On the other hand, the precomputed data may not be fully utilizable,
when computing the instances of patterns such as the one in Figure
\ref{fig:fp_exam}(b).
To enumerate the instances of such a pattern, we can
first scan the $L_3$ table and
for each accessed 3-hop cycle $\mu(a)\to \mu(b)\to \mu(c)\to \mu(a)$,
access the input graph $G(V,E)$
to verify
whether there is an edge that
connects the vertex $\mu(a)\in V$ mapped to $a$
to the vertex $\mu(c)\in V$ mapped to $c$
and
whether there is an edge that
connects the vertex $\mu(b)\in V$ mapped to $b$
to the vertex $\mu(a)\in V$ mapped to $a$.
If these edges do exist, they are retrieved and combined with the
edges
along the path
$\mu(a)\to \mu(b)\to \mu(c)\to \mu(a)$ to form an instance,
which is then passed to the algorithms of Section \ref{sec:flowcomp}
for flow computation.
In this case, the precomputed flows of paths in $L_3$ cannot be used
because the paths are not isolated in the instances of the pattern.

In general, precomputed flows along paths can be useful only for
pattern instances wherein these paths are independent and can be
progressively simplified using the technique proposed
in Section \ref{sec:simplification}.
Still, even when precomputed flows are not useful, the precomputed
paths can be used to accelerate finding the instances of the patterns.

We have assumed so far that the precomputed path tables 
are sorted by starting vertex (and by prefix in general).
This helps to reduce the cost of pattern matching
by merge-joining the tables.
If the number of path instances is not extreme, tables can also be
sorted by other columns or column indices can be
used to accelerate other cases of joins. 
In addition, it is possible to use hash tables for each of
the columns and replace joins by lookups.

\subsection{Non-rigid patterns}\label{sec:relaxed}

The patterns that we have defined so far have a rigid structure which
is determined by a DAG.
In some applications, however, certain patterns with more relaxed structure
could be of interest. Consider, for example, a money-laundering
pattern where a source node $a$ is sending payments to 
recipients (which do not have a fixed number) and then these
recipients send money back to $a$. We might be interested in identifying
instances of such patterns and their corresponding flows.
Right now, we could only define a set of different patterns and
measure their flows independently, as shown in Figure
\ref{fig:nonrigid}(a).
Then, we could aggregate the flows of all instances of the different
patterns that
correspond to the same node $a$ in order to compute the total flow
from $a$ to $a$ via other nodes.

This approach has several shortcomings. First, we would have to
compute and merge the results of multiple pattern queries. Second,
there is no limit on how many patterns we should use. Third, the final
result might not be correct, as the flows of subpatterns could be included
in the flows of superpatterns (for example, an instance of the 2nd
pattern in Figure \ref{fig:nonrigid}(a) includes two instances of the
first pattern).

In order to avoid these issues, we can define a
{\em relaxed} pattern as shown in Figure   \ref{fig:nonrigid}(b),
which links $a$ to $a$ by parallel paths
via any number of intermediate nodes.
Finding the instances of this pattern and measuring their flows is
very easy using our precomputation approach, as we only have to scan
the 2-hop cycle table $L_2$ and, for each instance of $a$, we
have to aggregate the flows of the corresponding rows of the table.
We can also set constraints to the number of paths in a non-rigid
pattern. For example, we may be interested in instances of the  
pattern shown in Figure   \ref{fig:nonrigid}(b) which include at least
10 cycles.

\begin{figure}
\centering
\begin{tabular}{cc}
\includegraphics[width=.6\columnwidth]{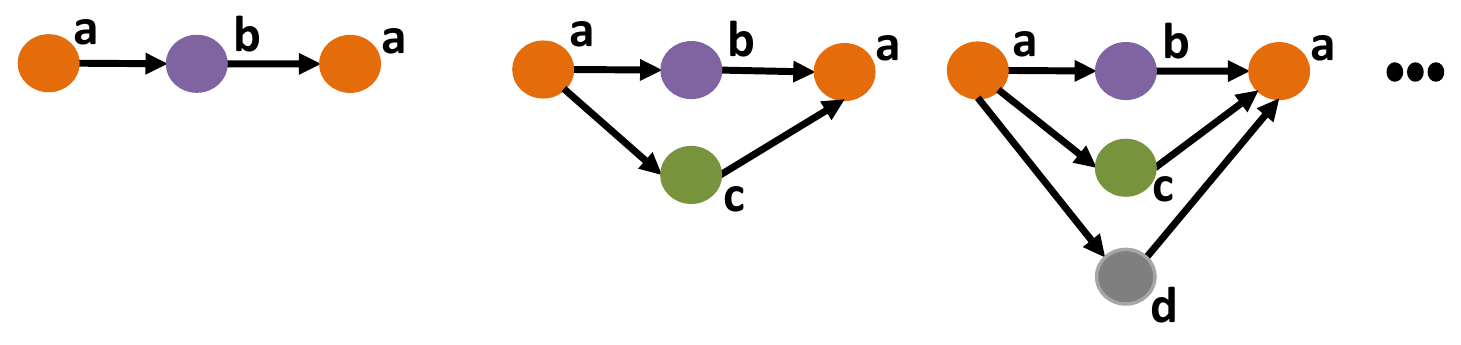}&
\includegraphics[width=.15\columnwidth]{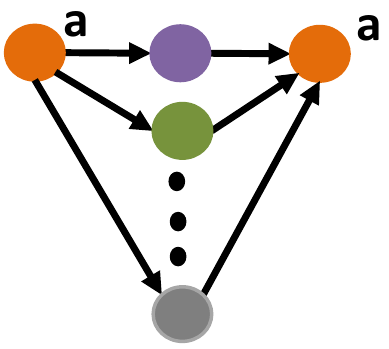}\\
(a) 2-hop rigid patterns &
(b) relaxed pattern \\
\end{tabular}
\caption{2-hop nonrigid pattern}
 \label{fig:nonrigid} \vspace{-.5em}
\end{figure}

\section{Experimental Evaluation}\label{sec:experiments}
In this section, we evaluate the performance of the flow computation
techniques proposed in Section \ref{sec:flowcomp}.
In addition, we evaluate the efficiency of the preprocessing-based
approach for flow pattern enumeration proposed in Section
\ref{sec:algorithm}.
All methods were implemented in C
and the experiments were run on a
MacBook Pro with an 2.3 GHz Quad-Core Intel Core i5 and 8GB memory.
For the implementation of LP, we used the lpsolve library%
\footnote{https://sourceforge.net/projects/lpsolve/}
(version 5.5.2.5). The source code of the paper is publicly available.%
\footnote{https://github.com/ChrysanthiKosifaki/FlowComputation} 

\subsection{Description of Datasets}\label{sec:dataset}
We used three real datasets,
generated from real interaction networks: the Bitcoin
transactions network, an internet traffic network
and a loans exchange network. We now provide details about the data.
Table \ref{table:datasets} summarizes statistics about them.

\stitle{Bitcoin:}
We downloaded all transactions in the bitcoin network \cite{Nakamoto_bitcoin:a}
up to 2013.12.28
from
http://www.vo.elte.hu/bitcoin/.
The data were collected and formatted by the
authors of \cite{DBLP:journals/corr/KondorPCV13}.
We joined tables `txedge.txt' with `txout.txt' to create a single table with
transactions of
the form (sender, recipient, timestamp, amount). We also used
table `contraction.txt' to merge addresses which belong to the
same user. Addresses were mapped to integers in a continuous range
starting from 0.
Finally, we converted all amounts to \bitcoinB~(originally in
Satoshis, where 1 Satoshi=$10^{-8}$\bitcoinB)
and removed all insignificant transactions with
amounts less than 10000 Satoshis.

\stitle{CTU-13:}
We extracted data from a botnet traffic network
\footnote{https://mcfp.felk.cvut.cz/publicDatasets/CTU-Malware-Capture-Botnet-52/},
created in CTU University\cite{garcia2014empirical}.
Hence, the vertices of the graph are IP addresses
and the interactions are data exchanges between them at different timestamps. 
We consider as flow the total amount of bytes transferred between IP addresses.

\stitle{Prosper Loans:}
Prosper%
\footnote{https://en.wikipedia.org/wiki/Prosper\_Marketplace}
is an online peer-to-peer loan
service.
We consider Prosper as an interaction network between users
who lend money to each other.
Each record
includes the lender, the borrower, the time of the transaction
and the loan amount. We disregarded the tax that the borrower paid
for the transaction and considered only the net loan amount.
The data were downloaded from
http://konect.uni-koblenz.de.

\begin{table}[ht]
\caption{Characteristics of Datasets}
\vspace{0.3cm}
\centering
 \small
\begin{tabular}{@{}|@{~}c@{~}|@{~}c@{~}|@{~}c@{~}|@{~}c@{~}|@{~}c@{~}|@{}}
\hline
Dataset &\#nodes &\#edges & \#interactions &avg. flow \\
\hline
Bitcoin & 12M&27.7M& 45.5M & 34.4\bitcoinB\\
CTU-13 & 607K&697K &2.8M & 19.2KB\\
Prosper Loans &88K&3M&3.04M&$\$$76\\[0.2ex]
\hline
\end{tabular}
\label{table:datasets}
\end{table}

\subsection{Flow Computation}\label{sec:exp:flowcomp}
In the first set of experiments, we evaluate the flow computation
techniques discussed in Section \ref{sec:flowcomp}.
For this purpose, we extracted a number of subgraphs from each network
and we applied flow computation on each of them.
Specifically, we identified seed vertices in the networks from which there
are paths (up to three hops) that pass through other vertices and then return to
the origin.
For each seed vertex, we
merged all edges along these paths to
form a single subgraph of the network.
Figure \ref{fig:ex_realdag} shows an example of such a subgraph,
formed by merging all paths that start from vertex 143 and end at the same vertex.

\begin{figure}[tbh]
  \centering
  \small
  \includegraphics[width=0.8\columnwidth,height=6cm]{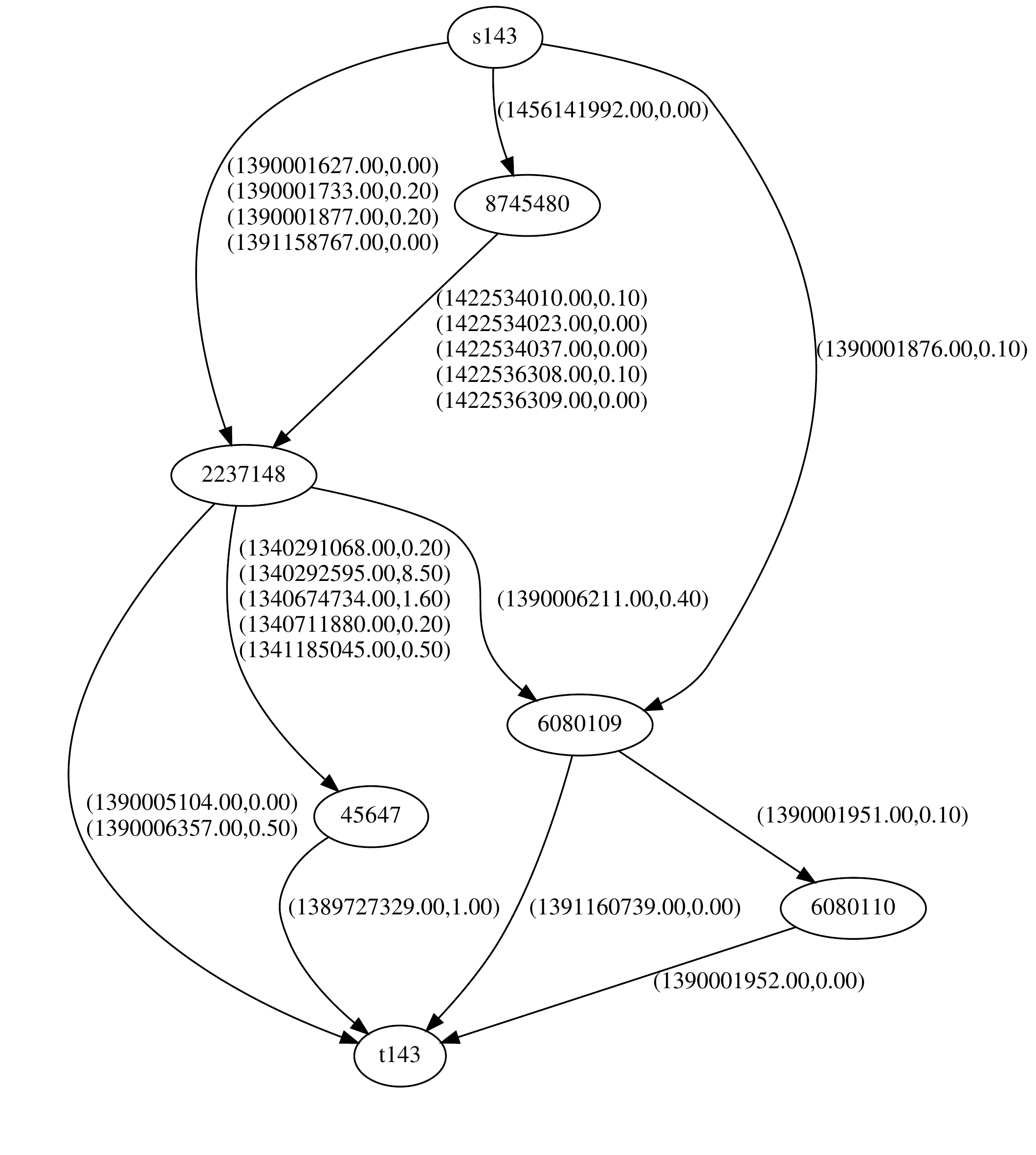}
  \caption{Example of a subgraph extracted from Bitcoin}
  \label{fig:ex_realdag}
\end{figure}

We discarded subgraphs with more than 10K interactions because the
LP algorithm for maximum flow computation was too slow on them.
The number of tested subgraphs extracted from each
dataset and their statistics are shown in Table \ref{table:dagstats}.
The subgraphs are relatively small in terms of vertices and
edges, but they have a large average number of interactions
(compared to the expected number derived from
Table \ref{table:datasets}).
Hence, (i) these subgraphs are statistically interesting,
because they have many interactions and relatively large flow
and (ii) computing the maximum flow through them is relatively
expensive (again, due to the large number of interactions).

\begin{table}[ht]
  \small
  \caption{Statistics of subgraphs}\label{table:dagstats}
  \centering
  \vspace{0.3cm}
  \begin{tabular}{|@{~}l@{~}|@{~}c@{~}|@{~}c@{~}|@{~}c@{~}|@{~}c@{~}|}
    \hline
 Dataset   &\#subgraphs&avg \#vertices&avg \#edges&avg \#interactions\\
    \hline
Bitcoin&48.7K&5.16&6.42&448.4\\
CTU-13    &9235&3.24&2.49&15.9\\
Prosper Loans &137&6.1&8&611.5\\
\hline
\end{tabular}
\end{table}

\stitle{Compared methods.}
We applied the following methods to compute the flow on the extracted
subgraphs from each dataset.
\begin{itemize}
  \item The {\bf greedy algorithm} presented in Section
    \ref{sec:flow:greedy}. This algorithm is naturally the fastest
    one, but computes the flow based on the greedy transfer
    assumption, i.e., it does not (always) find the maximum flow that
    can be transferred from the source to the sink of the graph.
  \item {\bf LP} solves the maximum flow problem using linear
    programming, as discussed in Section
    \ref{sec:maxflowcomp}, using a direct application of the LP
    solver.
  \item {\bf Pre} first applies    the
    greedy solubility test on the subgraph (explained in Section
    \ref{sec:greedyreduction}) to test whether the maximum flow can be
    computed using the greedy algorithm. In this case, it uses the
    greedy algorithm instead of LP. Otherwise, it applies the
    graph preprocessing approach (Section \ref{sec:flowcomp:prepdag})
    to remove any interactions, edges, or vertices that do not
    contribute to flow computation. If any edges and/or vertices are
    removed, it checks again for solubility by greedy. In the end, if the
    maximum flow is not guaranteed to be computed by greedy,  
    it applies LP.  
  \item {\bf PreSim} follows the steps of  Pre and if, in the
    end, LP has to be applied, PreSim attempts to further simplify the graph by
    applying the method presented in Section \ref{sec:simplification}
    which computes part of the
    maximum flow using the greedy algorithm.
    PreSim is our complete
    solution for maximum flow computation in temporal interaction networks.
\end{itemize}

\stitle{Results.}
The second rows of Tables \ref{table:expflowbtc}, \ref{table:expflowip}, and
\ref{table:expflowpl}
show the average runtime (in msec) 
of the compared flow computation methods
per tested subgraph.
The greedy algorithm is lightning fast, as its cost is
linear to the number of interactions.
Its running time in all cases is in the order
of microseconds. 
For the maximum flow problem, the baseline LP approach is very slow
especially on the Bitcoin subgraphs, which contain the largest number
of interactions on average (see Table \ref{table:dagstats}).
With the help of the preprocessing approach (Pre), the graphs are
simplified and the cost of maximum flow computation is
reduced up to 14 times compared to LP. Note that the time for
preprocessing the graphs is included in the measured runtimes.
Finally, the graph simplification method (PreSim) further reduces the
cost at least two times compared to Pre. On average, the speedup of
our proposed maximum flow computation approach (PreSim) over
LP is 11x, 13x, and 32x on three networks. 

\begin{figure*}[t!]
\subfigure[Bitcoin Network]{
   \label{fig:exp:bitcoin_dp}
    \includegraphics[width=0.32\textwidth]{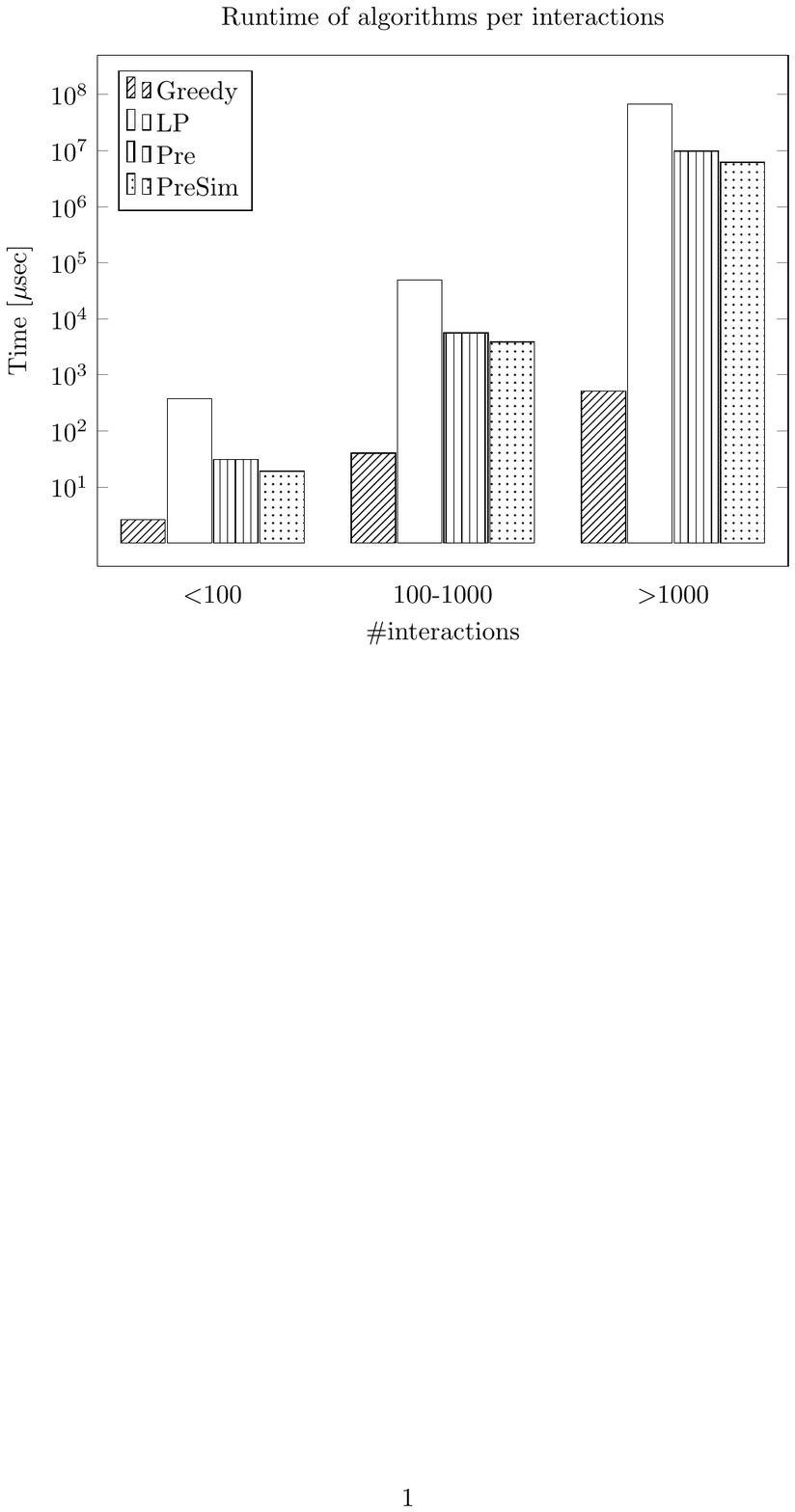}
    }\!\!\!\!
  \subfigure[Botnet Network]{
   \label{fig:exp:fb_dp}
    \includegraphics[width=0.32\textwidth]{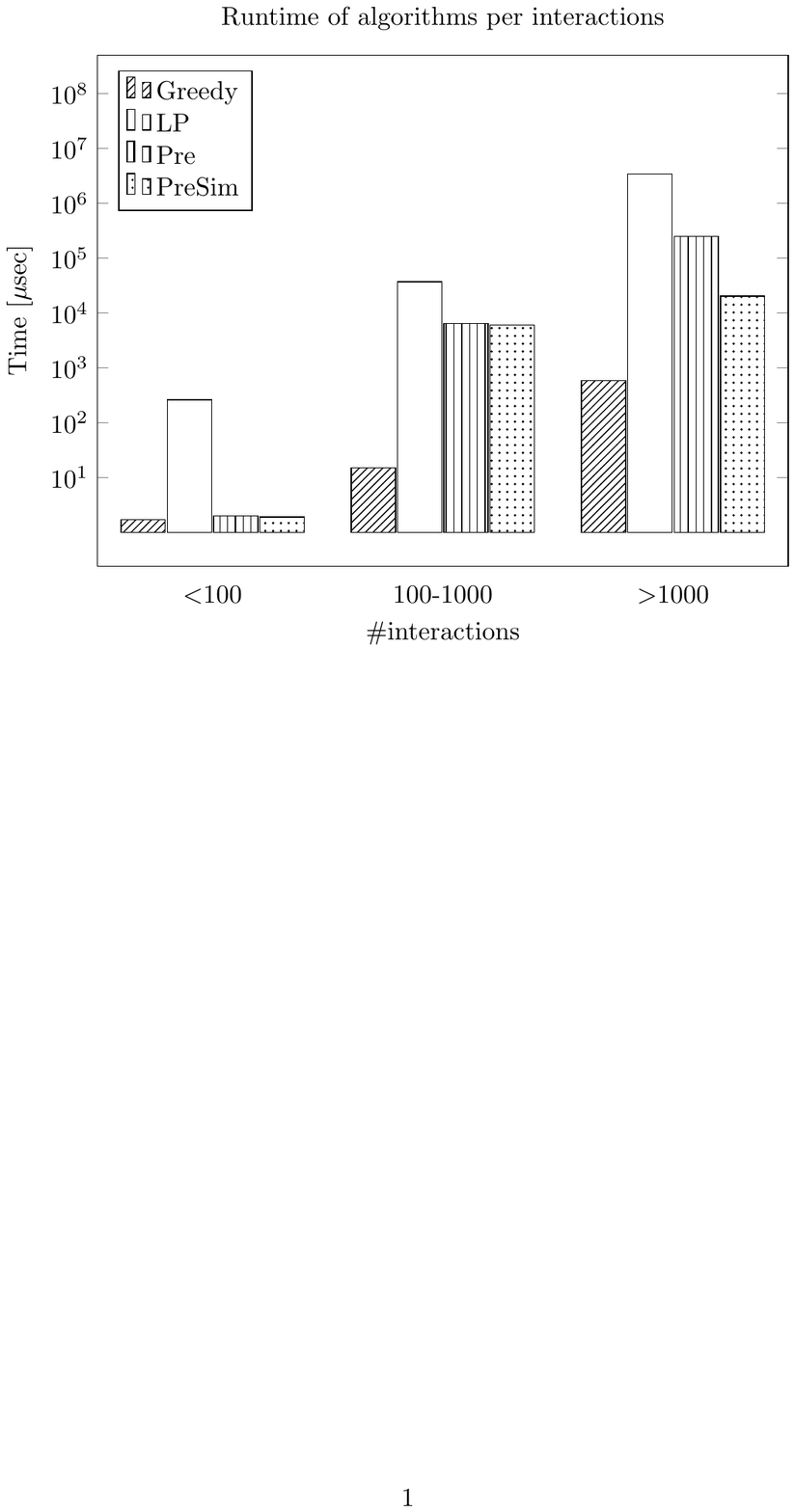}
    }\!\!\!\!
  \subfigure[Prosper Loans Network]{
   \label{fig:exp:traffic_dp}
    \includegraphics[width=0.32\textwidth]{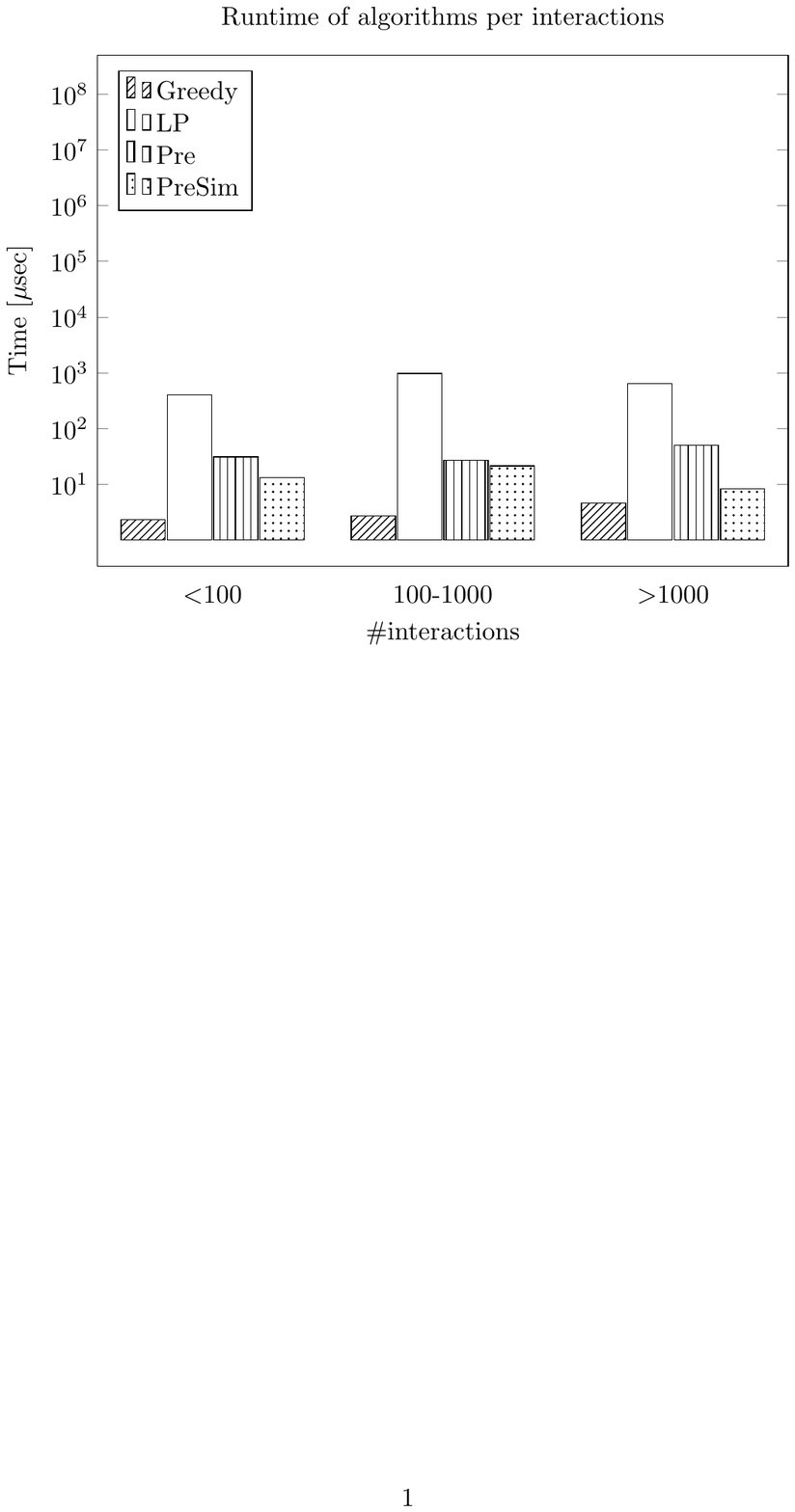}
    }
  \vspace{-0.1in}
  \caption{Runtime of algorithms as a function of the number of interactions}
  \label{fig:exp:compinter}
\end{figure*}

For a more detailed analysis of the results,
we divided the tested
subgraphs in three classes. Class A contains the easiest subgraphs,
which are found to be soluble by the greedy method. As explained
before, the cost of verifying whether a graph is soluble is very low,
so the cost of computing the maximum flow on these
graphs equals the cost of running the greedy algorithm.
Class B contains the subgraphs, which are found to be soluble by
greedy after they have been simplified by preprocessing. The cost for
computing the maximum flow on these graphs is again close to that of
the greedy algorithm. Finally, class C contains the hardest graphs,
which even after preprocessing cannot be solved using the greedy algorithm.
The last three rows of Tables
\ref{table:expflowbtc}--\ref{table:expflowpl}
average the runtimes of the tested methods on each of the three
classes of subgraphs.

We also divided the tested subgraphs into three categories based on the number
of interactions they include (less than 100 interactions, between 100
and 1000 interactions, more than 1000 interactions).
Figure \ref{fig:exp:compinter} compares the average performance of all methods
on each category of subgraphs at each dataset.
As expected, the costs of all methods increase with the number of
interactions. In general, the savings of PreSim and Pre over LP are
not affected by the magnitude of the problem size.
Overall, the experiments confirm the efficiency of the
proposed techniques in Section \ref{sec:flowcomp} for greedy and
maximum flow computation.

\begin{table}[h!]
  \small
  \caption{Runtime (msec) for Bitcoin subgraphs}
  \label{table:expflowbtc}
  \centering
  \vspace{0.3cm}
  \begin{tabular}{|l|l|c|c|c|}
    \hline
    &Greedy&LP&Pre&PreSim\\
    \hline
    {All (48.7K)}&0.0491&5775&838.8&524.5\\
    \hline
    {Class A (35.4K)}&0.0074&2667.18&0.0078&0.0078\\
    \hline
    {Class B (7891)}&0.295&7179.39&0.575&0.575\\
    \hline
    {Class C (5366)}&0.353&24248&7615.8&4762.43\\
    \hline
  \end{tabular}
\end{table}

\begin{table}[h!]
  \small
  \caption{Runtime (msec) for CTU-13 subgraphs}
  \label{table:expflowip}
  \centering
  \vspace{0.3cm}
  \begin{tabular}{|l|l|c|c|c|}
    \hline
    &Greedy&LP&Pre&PreSim\\
    \hline
    {All (9235)}&0.0035&10.313&6.314&0.7902\\
    \hline
     {Class A (9199)}&0.0032&3.835&0.0033&0.0033\\
    \hline
    {Class B (3)}&0.0037&71.07&0.0074&0.0074\\
    \hline
{Class C (33)}&0.0757&1810.38&1767.5&220.2\\
    \hline
  \end{tabular}
\end{table}

\begin{table}[h!]
  \small
  \caption{Runtime (msec) for Prosper Loans subgraphs}
  \label{table:expflowpl}
  \centering
  \vspace{0.3cm}
  \begin{tabular}{|l|l|c|c|c|}
    \hline
    &Greedy&LP&Pre&PreSim\\
    \hline
    {All (137)}&0.0027&0.5105&0.0352&0.0157\\
    \hline
     {Class A (94)}&0.0015&0.5072&0.0016&0.0016\\
    \hline
    {Class B (25)}&0.004&0.5646&0.008&0.008\\
    \hline
    {Class C (18)}&0.0067&0.4527&0.2373&0.0889\\
    \hline
  \end{tabular}
\end{table}

\subsection{Pattern Search}\label{sec:exp:patterns}
We now evaluate the flow pattern enumeration
approaches presented in Section \ref{sec:algorithm}.
Specifically, we compare the time that
the graph browsing (GB)
approach (Section \ref{sec:algo:direct})
and the preprocessing-based (PB) approach
(Section \ref{sec:algo:prepro})
need to find the instances of several simple graph patterns and to compute
the maximum flow of each instance.
We constructed main-memory representations of the three interaction
networks that facilitate
graph browsing (i.e., we can navigate to the neighbors of each vertex
with the help of adjacency lists).

Due to the high precomputation and storage cost, from datasets Bitcoin
and CTU-13, we were able to precompute and store only paths up
to 3 hops where the start and the end vertex are the same (i.e.,
cycles). Paths of longer sizes and of arbitrary nature are multiple
times larger than the original datasets.
On the other hand, the precomputed cycles up to three hops require 
at most 20\% space
compared to the size of the entire graphs.
For the Prosper Loans dataset, we also precomputed 2-hop chains (i.e.,
paths of three different nodes) which could easily be accommodated in the main
memory of our machine.

Figure \ref{fig:testedpatterns}
shows the set of patterns that we tested in the experiments.
We experimented with six rigid patterns (P1--P6) and three 
relaxed (non-rigid) patterns (RP1--RP3). In the non-rigid patterns
(see Section \ref{sec:relaxed}), all vertices
in the parallel paths (except for the source and the sink) are
required to be different.

\begin{figure}[tbh]
  \centering
  \small
  \includegraphics[width=0.99\columnwidth]{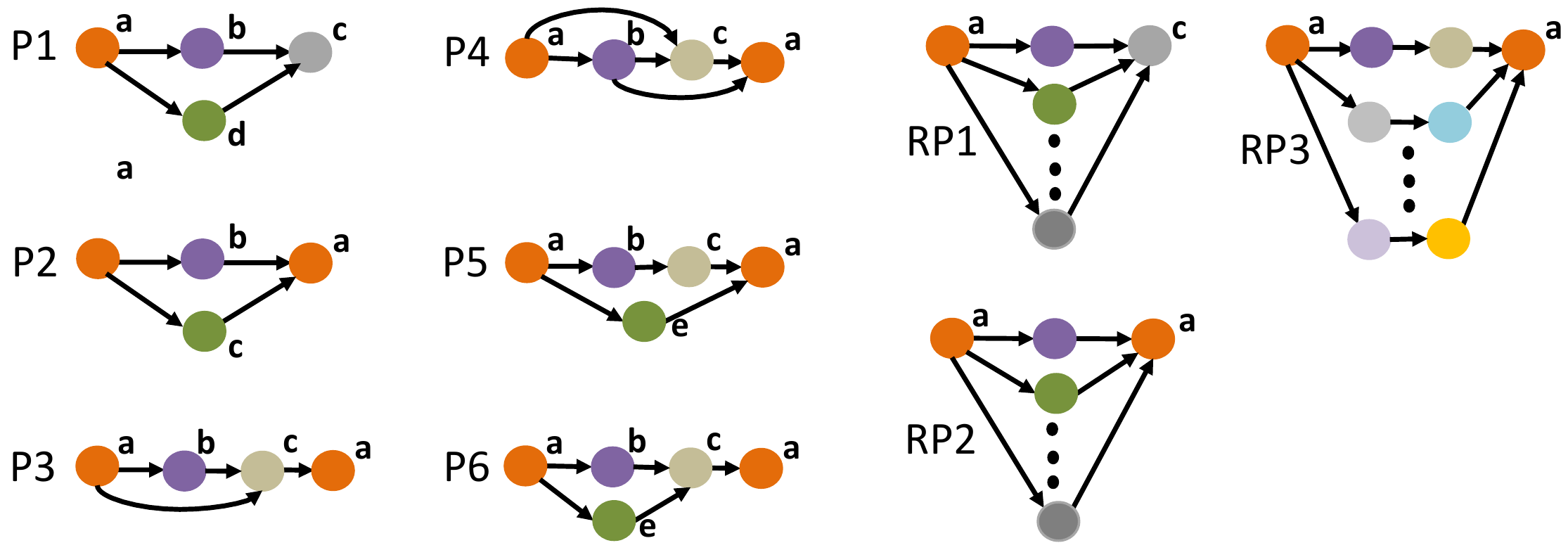}
  \caption{Set of tested patterns}
  \label{fig:testedpatterns}
\end{figure}

Tables \ref{table:patternsBTC}, \ref{table:patternsIP}, and
\ref{table:patternsPRO} compare the performance of GB to that of PB on
enumerating the instances of the various patterns and computing their
maximum flow. Note that for
Bitcoin and CTU-13 datasets, the processing times for P1 and RP1
were not included because PB was not applicable in this case (we have
not precomputed any path that would be useful).
In general, we observe that preprocessing pays off in most of the
tested patterns, as the runtimes of PB in most cases are at least one
order of magnitude lower than the corresponding ones of GB.

For some patterns and networks, prepcocessing (PB) does not give much
benefit compared to GB.
For example, for pattern P4 on the Bitcoin network
(marked with *), search by both GB and PB was terminated after finding the
first 3000 instances, because both methods are extremely slow.
The preprocessed flows cannot be used
and the maximum flow of the instances must be computed by LP.
Hence, on the Bitcoin network, PB has a similar cost as GB, as the
instances contain numerous interactions and maximum flow computation
dominates the overall cost of pattern enumeration.
The same holds for P6 (again on Bitcoin), which was terminated earlier
because both methods were quite slow.


\begin{table}[ht]
\caption{Pattern Search on Bitcoin}
\vspace{0.3cm}
\centering
 \small
\begin{tabular}{@{}|@{~}c@{~}|@{~}c@{~}|@{~}c@{~}|@{~}c@{~}|@{~}c@{~}|@{}}
\hline
\textbf{Pattern} &\textbf{Instances} &\textbf{Average flow} & \textbf{GB} &\textbf{PB} \\

\hline
\textbf{P2} & 22.3G & 56.15 & 23.2 hours &30.59 sec\\
\hline
\textbf{P3} & 2.8M &4786.18  & 3155.96 sec & 179.70 sec\\
\hline
\textbf{P4*} & 3000 & 697.04 & 446.73 sec & 421.85 sec\\
\hline
\textbf{P5} &577.5M &8069.2 & 15 days (est.) & 179.74 sec\\
 \hline 
\textbf{P6*} & 2.04T & 2.81 & 1445 sec &  1059 sec\\
\hline
\textbf{RP2} & 655K & 39.86 & 422.79 sec & 53.273 msec \\
\hline
\textbf{RP3} &1.2M & 1.86&306 min& 13.53 msec\\[0.2ex]  
\hline
\end{tabular}
\label{table:patternsBTC}
\end{table}

\begin{table}[ht]
\caption{Pattern Search on CTU-13}
\vspace{0.3cm}
\centering
 \small
\begin{tabular}{@{}|@{~}c@{~}|@{~}c@{~}|@{~}c@{~}|@{~}c@{~}|@{~}c@{~}|@{}}
\hline
\textbf{Pattern} &\textbf{Instances} &\textbf{Average flow} & \textbf{GB} &\textbf{PB} \\

\hline
\textbf{P2} &709M  & 2888.90 & 1952.61 sec & 762.65 msec\\
\hline
\textbf{P3} & 182 & 528.5K & 55.71 sec & 8.61 msec  \\
\hline
\textbf{P4} &91 & 1.56M &  58.564 sec& 2.518 sec \\
\hline
\textbf{P5} &208K &13116.5 &443.97 sec &4.73 msec \\ 
 \hline 
\textbf{P6} & 586 & 52892 & 410.4 sec& 14.87 msec\\
\hline
\textbf{RP2} &51266  & 11942.65 & 24.15 sec & 0.63  msec\\
\hline
\textbf{RP3} &91 &61485.58 & 375.39 sec & 0.035 msec  \\[0.2ex]  
\hline
\end{tabular}
\label{table:patternsIP}
\end{table}

\begin{table}[ht]
\caption{Pattern Search on  Prosper Loans}
\vspace{0.3cm}
\centering
 \small
\begin{tabular}{@{}|@{~}c@{~}|@{~}c@{~}|@{~}c@{~}|@{~}c@{~}|@{~}c@{~}|@{}}
\hline
\textbf{Pattern} &\textbf{Instances} &\textbf{Average flow} & \textbf{GB} &\textbf{PB} \\

\hline
 \textbf{P1} & 5.12M& 45.89& 119.08 sec & 2.80 sec \\
 \hline
\textbf{P2} & 201 & 223.23 & 88.66 msec&0.004 msec\\
\hline
\textbf{P3} & 268 & 100.44 & 3.57 sec&1.3 msec \\
 \hline
\textbf{P4} &98 & 299.55& 3.54 sec& 0.723 msec \\
\hline 
\textbf{P5} &1833  & 121.47&605.67 msec & 0.021 msec \\
\hline
\textbf{P6} & 1296  &43.55  & 474.61 msec  & 11.13 msec \\
 \hline 
\textbf{RP1} & 25.5M & 25.12 & 133.37 sec & 3.01 sec  \\
\hline
\textbf{RP2} &260  & 58.061 & 0.016 msec  & 0.004 msec  \\
\hline
\textbf{RP3} &532 &10.94 & 503.89 msec & 0.040 msec\\[0.2ex]  
\hline
\end{tabular}
\label{table:patternsPRO}
\end{table}

\section{Conclusions}\label{sec:conclusion}
In this paper we defined and studied the problem of flow computation
in interaction networks. We defined two models for flow computation,
one based on greedy flow transfer between vertices and one that
assumes arbitrary flow transfer and the objective is to compute the
maximum flow.
We showed that flow computation based on the first model
can be conducted very efficiently, whereas
the more interesting maximum flow computation is more
expensive. In view of this, we proposed and evaluated
a number of techniques
to reduce the cost of maximum flow computation
by at least one order of magnitude.
Note that our techniques are readily applicable for the {\em
  time-restricted} version of the problem, where we are only
interested in interactions that happen within a time window (i.e., by
simply disregarding all interactions that happened outside the window).
Finally, we studied the problem of pattern enumeration in large
graphs, where for each pattern instance, we also have to compute the
maximum flow. For this problem we proposed a technique that
precomputes the instances of simple subgraphs and their flows and uses
them to accelerate the finding of more complex patterns that have
these subgraphs as components.

Directions for future work include (i) the investigation of additional
techniques for reducing the cost of the maximum flow problem, (ii) the
investigation of similar simplification techniques to other flow
computation problems, and (iii) the automatic identification of interesting
patterns and subgraphs that have significantly more flow than expected.

\newpage

\bibliographystyle{abbrv}
\bibliography{references}

\end{document}